\newcommand{\normal}[1]{\mathcal{N}(0,#1)}
\newtheorem{@definition}{Definition}[section]
 \newenvironment{definition}{\begin{@definition}}{\end{@definition}}
\date{}
\title{Better Gaussian Mechanism using Correlated Noise}
\author{
    Christian Janos Lebeda\thanks{This work was carried out when Lebeda was at the IT University of Copenhagen.} \\
    Inria, University of Montpellier\\
    \texttt{christian-janos.lebeda@inria.fr}
}
\begin{document}

\maketitle

\begin{abstract}
  We present a simple variant of the Gaussian mechanism for answering differentially private queries when the sensitivity space has a certain common structure.
  Our motivating problem is the fundamental task of answering $d$ counting queries under the add/remove neighboring relation.
  The standard Gaussian mechanism solves this task by adding noise distributed as a Gaussian with variance scaled by $d$ independently to each count.
  We show that adding a random variable distributed as a Gaussian with variance scaled by $(\sqrt{d} + 1)/4$ to all counts allows us to reduce the variance of the independent Gaussian noise samples to scale only with $(d + \sqrt{d})/4$.
  The total noise added to each counting query follows a Gaussian distribution with standard deviation scaled by $(\sqrt{d} + 1)/2$ rather than $\sqrt{d}$.
  The central idea of our mechanism is simple and the technique is flexible.
  We show that applying our technique to another problem gives similar improvements over the standard Gaussian mechanism.
\end{abstract}

\section{Introduction}
\label{sec:intro}

The Gaussian mechanism is one of the most important tools in differential privacy.
In its most basic form, the mechanism first computes the answer to $d$ real-valued queries.
The output is then perturbed by adding noise sampled from a zero-mean Gaussian distribution independently to the answer of each query.
The magnitude of the Gaussian noise required to satisfy differential privacy depends on the desired privacy parameters and the $\ell_2$ sensitivity of the queries. 
The mechanism is used as a key subroutine in a long list of more complicated differential privacy mechanisms such as DP-SGD~\cite{deepLearningDP} and the Census TopDown Algorithm~\cite{censusTopDown} to name just two examples (The TopDown Algorithm uses the discrete variant of the Gaussian mechanism~\cite{discrete-Gaussian}). 

The Gaussian mechanism was introduced in some of the earliest work on differential privacy~\cite{dworkCalibratingNoise,DworkKMMN06OurDataOurselves}.
Since then, the mechanism has been subject to a lot of study due to its importance. 
Notable results include a tight analysis under approximate differential privacy~\cite{analyticalGaussian} as well as other definitions of differential privacy~\cite{zeroConcentrated, renyiDP,gaussianDPand-f-DP}.
An adjacent line of work has improved the analysis of mechanisms where additive Gaussian noise is a key element. 
Recent results include an exact analysis of the Gaussian Sparse Histogram Mechanism~\cite{GaussianSparseHistogramMechanism} and development of numerical tools that allow for tighter privacy accounting used for DP-SGD~(e.g. \cite{sommerPLD,koskela20-gaussian-variants,koskela21-one-dimensional-outputs,zhu22-optimal-characteristic-functions}).
These results give stronger privacy guarantees of mechanisms for various settings, which imply an improved privacy-utility trade-off.
Improving the Gaussian mechanism and its variants by even small constant factors can have a significant impact on downstream applications.
One notable such example is the accuracy of models trained with DP-SGD which can vary significantly based on the amount of noise
(see e.g.~\cite[Figure~1]{DeBHSB22}).

As mentioned above, the standard Gaussian mechanism privately releases queries by adding noise with magnitude scaled to the $\ell_2$ sensitivity.
The $\ell_2$ sensitivity is defined as the biggest distance in the $\ell_2$ norm between the query outputs for any pair of neighboring datasets.
If we have no guarantees on the $\ell_2$ norm of the query outputs we can clip data points before adding noise.
This is common practice for several tasks such as approximating gradients for private machine learning~(e.g. \cite{deepLearningDP,DP-FTRL}) and mean estimation~(e.g. \cite{biswas_coinpress_2020, huang_instance-optimal_2021, PLAN}). 
The Gaussian mechanism can be applied to many settings because we only need to bound the $\ell_2$ sensitivity to privately release an estimate to any real-value queries.
However, in some cases we can achieve better accuracy 
by utilizing the structure of the queries.

In this work we focus on settings where we want to estimate the sum of all data points in a dataset.
We refer to the set of all possible differences between the non-private sums for neighboring dataset as the sensitivity space.
We utilize a simple structure of the sensitivity space present for certain queries.
Our technique reduces the magnitude of noise for such queries over the standard Gaussian mechanism that only considers the maximum $\ell_2$ norm of the sensitivity space.
The structure we consider most naturally occurs under the add/remove neighboring relation.

Before presenting our technique it is worth considering a query where the structure we utilize is not present.
We compare the sensitivity under the add/remove and replacement neighboring relations.
For many tasks, the $\ell_2$ sensitivity under the replacement neighboring relation is twice that of the add/remove relation.
For example, if we clip the $\ell_2$ norm of data points in $\mathbb{R}^d$ to some value $C > 0$ the $\ell_2$ sensitivity is clearly $C$ under the add/remove relation.
But the $\ell_2$ sensitivity is $2C$ under replacement because we can replace any vector $x \in \mathbb{R}^d$ where $\| x \|_2 = C$ with $-x$.
The magnitude of noise added by the standard Gaussian mechanism therefore differs by a factor of 2 between the two neighboring relations.

In contrast to the example above, consider the task of answering $d$ counting queries. 
Here each data point is in $\{0,1\}^d$.
The $\ell_2$ distance between neighboring datasets under add/remove is maximized when adding or removing the vector of all ones (which we denote ${\bf 1}^d$).
Under the replacement neighboring relation the distance is maximized by replacing any data point with one that disagrees on all coordinates.
In both cases the $\ell_2$ sensitivity is $\sqrt{d}$.
The standard Gaussian mechanism therefore adds noise with the same magnitude under either neighboring relation.
Our main result is to show that adding the same sample from a Gaussian distribution to all counts allows us to reduce the magnitude of the independent noise added to each query under the add/remove neighboring relation.
\medskip
\\ \noindent 
\textsc{Theorem~\ref{thm:main-theorem} (simplified).} 
\textit{
Let $f(X) \in \mathbb{N}^d$ be the answer to $d$ counting queries on dataset $X$.
Define random variables such that $\eta \sim \normal{\frac{\sqrt{d} + 1}{4\mu^2}}$, $Z \sim \normal{\frac{d + \sqrt{d}}{4\mu^2}I_d}$, and $W \sim \normal{\frac{d}{\mu^2}I_d}$ for any $\mu > 0$.
Then the mechanism that outputs $f(X) + \eta{\bf 1}^d + Z$ has the same differential privacy guarantees under the add/remove neighboring relation as the mechanism that outputs $f(X) + W$.} 
\medskip

The value of $\mu$ depends on the desired privacy guarantees.
For simplicity of presentation we assume here that $\mu = 1$.
The error of each counting query in our mechanism depends on the sum of two zero-mean Gaussian noise samples. 
The error is distributed as $\normal{\frac{d + 2\sqrt{d} + 1}{4}}$ and as such
the standard deviation of our mechanism is $(\sqrt{d} + 1)/2$ while the standard deviation for the standard Gaussian mechanism is $\sqrt{d}$.
Note that in general we cannot hope to add Gaussian noise under add/remove for any query with less than half the magnitude required under replacement.
The group privacy property of differential privacy allows us to use any mechanism designed for add/remove under the replacement relation (see e.g.~\cite{gaussianDPand-f-DP,zeroConcentrated}).
Our mechanism adds near-optimal Gaussian noise even for a more restrictive setting which we discuss in Section~\ref{sec:bounded-density}.

Many data analysis tasks depend on the size of the input dataset $n$.
However, since the size of neighboring datasets differ by $1$ under the add/remove neighboring relation we cannot release the exact size under differential privacy.
This is often disregarded as a technical detail because we can estimate the size of the dataset using additional privacy budget.
Some private algorithms explicitly assume that we have access to either an estimate or a bound of $n$ (see e.g. \cite{propose-test-release,geometryOfDPSparseApproximate}).
Other mechanisms implicitly rely on an estimate of the dataset size such as some versions of DP-SGD~\cite{deepLearningDP} since gradients are usually normalized based on the expected batch size which is a function of $n$.
Our mechanism also relies on an estimate of the dataset size but we do not assume that an estimate is known.
We instead use part of our privacy budget to compute an estimate.
There is a trade-off between the accuracy of this estimate and the error for each query which we discuss in Section~\ref{sec:reduced-correlated-noise}.
If we have access to an estimate of the dataset size we can reduce the magnitude of noise added to each query from $(\sqrt{d} + 1)/2$ to $\sqrt{d}/2$.

Our mechanism is simple and it is likely to be easily implementable in libraries that already support the standard Gaussian mechanism. 
We argue that simplicity is a desirable feature of practical differentially private mechanisms.
In many cases (e.g. the 2020 US Census~\cite{censusTopDown}) the privately released data will be used by data analysts who might not be familiar with the theory behind differential privacy.
It is easier to communicate how the data was protected when the differentially private mechanism is simple. 
It is also easier to take the effect of introduced noise into account for any downstream analysis.
Furthermore, it is well known that implementations of even basic concepts and mechanisms from differential privacy requires care to avoid privacy issues~(See e.g.~\cite{mironov-floating-point-issues,CasacubertaSVW22sensitivity,DP-floating-point-attacks}).
We discuss some considerations for implementation at the end of Section~\ref{sec:algo}.
We believe that our technique has the potential to improve mechanisms for other settings apart from counting queries.
In Section~\ref{sec:extensions} 
we adapt our core idea to a setting where we add noise with a smaller magnitude than the standard Gaussian mechanism under both the add/remove and replacement neighboring relations.
We conclude our paper by discussing related work in Section~\ref{sec:related} and potential future directions in Section~\ref{sec:conclusion}.

\section{Preliminaries}
\label{sec:prelim}

{\it Setup of this paper.}
The Gaussian mechanism is used to privately estimate any real-valued function of a dataset $X \in \mathcal{U}^\mathbb{N}$.
We focus on settings where a dataset $X = (x_1, x_2, \dots, x_n)$ consists of $n$ data points where $x_i \in \mathbb{R}^d$ and we want to estimate the sum of all data points $f(X) = \sum_{i = 1}^n x_i$. 
Throughout this paper we write $[m]$ to denote the set $\{1,2,\dots,m\}$ for any $m \in \mathbb{N}$, we denote by ${\bf r}^d$ the $d$ dimensional vector where all entries equal some value $r \in \mathbb{R}$, and we write $I_d$ to denote the $d \times d$ identity matrix. 

{\it Differential Privacy.}
Differential privacy~\cite{dworkCalibratingNoise} is a statistical property of a randomized mechanism.
Intuitively, a randomized mechanism satisfies differential privacy if the output distribution 
of the mechanism is not affected significantly by any individual.
The notion of neighboring datasets is used to model the impact of any one individual's data.
The two most commonly used definitions are the add/remove and replacement variants defined below.
We consider both definitions in this paper but focus mainly on the add/remove neighboring relation.
\begin{definition}[Neighboring datasets (denoted $X \sim X'$)]
    \label{def:neighboring-datasets}
    Let $X = (x_1, x_2, \dots, x_n)$ be a dataset of size $n$. 
    If $X$ and $X'$ are neighboring datasets under the add/remove neighboring relation then there exists an $i$ such that either $X' = (x_1, \dots, x_{i - 1}, x_{i + 1}, \dots, x_n)$ or $X = (x'_1, \dots, x'_{i - 1}, x'_{i + 1}, \dots, x_{n + 1})$. 
    If $X$ and $X'$ are neighboring datasets under the replacement neighboring relation we have $|X| = |X'|$ and there exists an $i$ such that $x_j = x'_j$ for all $j \neq i$.
\end{definition}

We present our mechanism in terms of Gaussian differential privacy ($\mu$-GDP)~\cite{gaussianDPand-f-DP}.
Informally, an algorithm satisfies $\mu$-GDP if the output distributions for any neighboring datasets are at least as similar as two Gaussian distributions.
This is more precisely defined in Definition~\ref{def:gdp}.
Note that none of our results relies on Gaussian differential privacy. 
In the remark below we discuss how to easily translate our results to some other popular definitions of differential privacy by replacing $\mu$ with an appropriate value.

\begin{definition}[{Gaussian differential privacy~\cite[Definition~4]{gaussianDPand-f-DP}}]
    \label{def:gdp}
    For any randomized mechanism $\mathcal{M} \colon \mathcal{U}^\mathbb{N} \rightarrow \mathcal{R}$ satisfying $\mu$-GDP it holds for all neighboring datasets $X \sim X'$ that
    \[
        T(\mathcal{M}(X),\mathcal{M}(X')) \geq T(\normal{1}, \mathcal{N}(\mu, 1)) \enspace .
    \]
\end{definition}

Here $T(P,Q): [0,1] \rightarrow [0,1]$ denotes a trade-off function as defined below, and the inequality in Definition~\ref{def:gdp} must hold for all inputs to the functions (any $0 \leq \alpha \leq 1$).
The trade-off function gives a lower bound on the type II error achievable for any level of type I error.
We note that understanding the technical details of the trade-off function and $\mu$-GDP is not required to follow the proofs in this paper.
For more background on trade-off functions and $f$-differential privacy, we refer interested readers to Definitions~2 and~3 of~\cite{gaussianDPand-f-DP}.

\begin{definition}[Trade-off function]
    Let $P$ and $Q$ be any two probability distributions defined on the same space. The trade-off function $T(P,Q): [0,1] \rightarrow [0,1]$ is defined as
    \[
        T(P,Q)(\alpha) = \inf\{\beta_\phi : \alpha_\phi \leq \alpha \} \,,
    \]
    where the infimum is taken over all (measurable) rejection rules. For a rejection rule $\phi$, the values of $\alpha_\phi$ and $\beta_\phi$ denotes the type I and type II error rates, respectively. 
\end{definition}

\paragraph*{Remark}
    Let $f \colon \mathcal{U}^\mathbb{N} \rightarrow \mathbb{R}$ be a function with sensitivity $1$.
    Let $\sigma > 0$ be large enough such that $f(X) + \mathcal{N}(0, \sigma^2)$ satisfies either $(\varepsilon, \delta)$-differential privacy~\cite[Theorem~8]{analyticalGaussian}, $(\alpha, \varepsilon)$-RDP~\cite[Corollary 3]{renyiDP}, or $\rho$-zCDP~\cite[Proposition~1.6]{zeroConcentrated}.
    Then all mechanisms presented in this paper satisfy the desired differential privacy guarantee if $1/\mu^2$ is replaced by $\sigma^2$.
\\

The $\ell_2$ sensitivity of a function as defined below is a bound on the distance between the output for any pair of neighboring datasets. 
More generally, the sensitivity space is the set of all possible differences between outputs for neighboring datasets.
The $\ell_2$ sensitivity equals the maximum $\ell_2$ norm for any element in the sensitivity space.

\begin{definition}[$\ell_2$ sensitivity]
  \label{def:l2-sens}
  The $\ell_2$ sensitivity of a function $f \colon \mathcal{U}^\mathbb{N} \rightarrow \mathbb{R}^d$ is
  \[
    \Delta f = \max_{X \sim X'} \|f(X) - f(X')\|_2 \,,
  \]
  where $\|x\|_2 = \sqrt{\sum_{i=1}^d x_i^2}$ denotes the $\ell_2$ norm of any $x \in \mathbb{R}^d$.
\end{definition}

\begin{definition}[Sensitivity space]
    The sensitivity space of a function $f \colon \mathcal{U}^\mathbb{N} \rightarrow \mathbb{R}^d$ is the set
    \[
        \{f(X) - f(X') \in \mathbb{R}^d \vert X \sim X' \} \,.
    \]
\end{definition}

The magnitude of noise added by the Gaussian mechanism scales linearly in the $\ell_2$ sensitivity and a value based on the desired privacy guarantees.
Lemma~\ref{lem:gaussian-mech-GDP} gives the privacy guarantees of the Gaussian mechanism.
We refer to this variant as the standard Gaussian mechanism throughout the paper.

\begin{lemma}[The Gaussian mechanism~\cite{gaussianDPand-f-DP}]
  \label{lem:gaussian-mech-GDP}
  Let $f \colon \mathcal{U}^\mathbb{N} \rightarrow \mathbb{R}^d$ be a function with $\ell_2$ sensitivity $\Delta f$. 
  Then the mechanism that outputs $f(X) + Z$ where $Z \sim \mathcal{N}(0, (\Delta f)^2/\mu^2 I_d)$ satisfies $\mu$-GDP.
\end{lemma}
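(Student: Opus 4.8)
The plan is to fix an arbitrary pair of neighboring datasets $X \sim X'$ and compute the trade-off function between $\mathcal{M}(X) = f(X) + Z$ and $\mathcal{M}(X') = f(X') + Z$ directly, then compare it against the reference function $T(\normal{1}, \mathcal{N}(\mu, 1))$ demanded by Definition~\ref{def:gdp}. Writing $\sigma = \Delta f / \mu$ and $v = f(X) - f(X')$, the two outputs are the spherical Gaussians $\mathcal{N}(f(X), \sigma^2 I_d)$ and $\mathcal{N}(f(X'), \sigma^2 I_d)$ with identical covariance and means separated by $v$, where $\|v\|_2 \le \Delta f$ by the definition of $\ell_2$ sensitivity.

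First I would reduce the $d$-dimensional testing problem to a one-dimensional one. By the Neyman--Pearson lemma the optimal tests are the level sets of the likelihood ratio, and for two spherical Gaussians with common covariance this ratio is a strictly monotone function of the single linear statistic $\langle v, y \rangle$. Hence the optimal rejection regions are half-spaces with normal direction $v$, and the problem is statistically equivalent to distinguishing $\mathcal{N}(0, \sigma^2)$ from $\mathcal{N}(\|v\|_2, \sigma^2)$ from a single scalar observation. Rescaling by $\sigma$ turns this into distinguishing $\normal{1}$ from $\mathcal{N}(\|v\|_2/\sigma, 1)$, so
\[
T(\mathcal{M}(X), \mathcal{M}(X')) = T\!\left(\normal{1}, \mathcal{N}(\|v\|_2/\sigma, 1)\right).
\]

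To finish I would bound the rescaled separation. Since $\|v\|_2 \le \Delta f$ and $\sigma = \Delta f / \mu$, we get $\|v\|_2/\sigma \le \mu$. The Gaussian trade-off function $T(\normal{1}, \mathcal{N}(m, 1))$ is pointwise nonincreasing in $m \ge 0$, because a larger mean gap only makes the hypotheses easier to separate and thus lowers the attainable type~2 error at every type~1 level. Therefore
\[
T(\mathcal{M}(X), \mathcal{M}(X')) = T\!\left(\normal{1}, \mathcal{N}(\|v\|_2/\sigma, 1)\right) \ge T\!\left(\normal{1}, \mathcal{N}(\mu, 1)\right),
\]
and since $X \sim X'$ was arbitrary this is exactly $\mu$-GDP.

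The main obstacle is the dimension-reduction step: rigorously showing that the $d$-dimensional test collapses to a one-dimensional location test. The cleanest route is to note that the log-likelihood ratio between the two hypotheses is an affine function of $\langle v, y \rangle$, so $\langle v, Z \rangle$ is a sufficient statistic and the Neyman--Pearson-optimal tests are exactly the half-spaces orthogonal to $v$; equivalently, the component of $Z$ in the orthogonal complement of $v$ is identically distributed under both hypotheses and carries no information. The remaining claim that $T(\normal{1}, \mathcal{N}(m, 1))$ is nonincreasing in $m$ is standard and follows from the closed form $T(\normal{1}, \mathcal{N}(m, 1))(\alpha) = \Phi(\Phi^{-1}(1-\alpha) - m)$.
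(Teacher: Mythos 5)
Your proof is correct. The paper does not prove this lemma at all---it is imported directly from the cited reference \cite{gaussianDPand-f-DP}---and your argument is essentially the standard proof from that source: reduce the $d$-dimensional test between $\mathcal{N}(f(X), \sigma^2 I_d)$ and $\mathcal{N}(f(X'), \sigma^2 I_d)$ to a one-dimensional location test via the Neyman--Pearson likelihood ratio (which depends on the data only through $\langle v, y\rangle$), yielding the trade-off function $T(\normal{1}, \mathcal{N}(\|v\|_2/\sigma, 1))$, and then conclude by the pointwise monotonicity of $\alpha \mapsto \Phi(\Phi^{-1}(1-\alpha) - m)$ in $m$ together with $\|v\|_2/\sigma \le \mu$.
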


The last tool we need from the differential privacy literature is the fact that the privacy guarantees of a mechanism are never negatively affected by post-processing.

\begin{lemma}[Post-processing~{\cite[Proposition~4]{gaussianDPand-f-DP}}]
    \label{lem:post-processing}
    Let $\mathcal{M} \colon \mathcal{U}^\mathbb{N} \rightarrow \mathcal{R}$ denote any mechanism satisfying $\mu$-GDP. 
    Then for any (randomized) function $g \colon \mathcal{R} \rightarrow \mathcal{R}'$ the composed mechanism $g \circ \mathcal{M} \colon \mathcal{U}^\mathbb{N} \rightarrow \mathcal{R}'$ also satisfies $\mu$-GDP.
\end{lemma}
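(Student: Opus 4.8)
## Proof Proposal

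The plan is to reduce the post-processing statement to the defining inequality of $\mu$-GDP, namely the bound on trade-off functions from Definition~\ref{def:gdp}. Since $\mathcal{M}$ satisfies $\mu$-GDP, we already know that for every pair of neighboring datasets $X \sim X'$ we have $T(\mathcal{M}(X),\mathcal{M}(X')) \geq T(\normal{1}, \mathcal{N}(\mu,1))$. To prove that $g \circ \mathcal{M}$ also satisfies $\mu$-GDP, it suffices to show, for the same arbitrary neighboring pair, that $T(g(\mathcal{M}(X)), g(\mathcal{M}(X'))) \geq T(\mathcal{M}(X), \mathcal{M}(X'))$. Chaining this with the assumed inequality then yields the required lower bound, and since $X \sim X'$ was arbitrary the conclusion follows.

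The core of the argument is therefore the \emph{data-processing inequality for trade-off functions}: applying the same (possibly randomized) map $g$ to two distributions can only make them harder to distinguish, hence can only raise the trade-off function pointwise. First I would fix neighboring $X \sim X'$ and write $P = \mathcal{M}(X)$, $Q = \mathcal{M}(X')$, so that the pushforward distributions are $g_* P = g(\mathcal{M}(X))$ and $g_* Q = g(\mathcal{M}(X'))$. The key step is to argue that any test (rejection rule) $\phi$ operating on the output space $\mathcal{R}'$ of $g$ can be pulled back through $g$ to a test on $\mathcal{R}$: define $\psi = \mathbb{E}[\phi \circ g]$, where the expectation accounts for the internal randomness of $g$. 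This $\psi$ is a valid test on $\mathcal{R}$ whose type~I and type~II errors against $(P,Q)$ exactly match those of $\phi$ against $(g_* P, g_* Q)$. Consequently, the set of achievable error pairs for $(g_* P, g_* Q)$ is a subset of those achievable for $(P,Q)$, which translates directly into the pointwise inequality $T(g_* P, g_* Q) \geq T(P, Q)$ on $[0,1]$.

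I expect the main obstacle to be handling the \emph{randomized} case of $g$ cleanly, since a deterministic pullback of the test is immediate but the randomized case requires averaging over $g$'s internal coins and verifying that the resulting $\psi$ still takes values in $[0,1]$ and reproduces the same error profile. This is where one must be careful that $\psi(y) = \mathbb{E}_g[\phi(g(y))]$ genuinely computes the rejection probability of the composed procedure, and that no measurability issue arises. This data-processing property is standard in the $f$-DP framework, and in fact it is precisely the content of the post-processing result cited as~\cite[Proposition~4]{gaussianDPand-f-DP}; I would either invoke it directly or reproduce the short test-pullback argument above. Once the data-processing inequality is in hand, the composition with the GDP assumption is a one-line chaining, so the entire proof is short and the only genuinely technical point is the randomized-$g$ reduction.
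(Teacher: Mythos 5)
The paper does not prove this lemma at all — it is quoted directly from \cite[Proposition~4]{gaussianDPand-f-DP} as a known result — and your test-pullback argument is precisely the standard proof given in that cited source, so you have reproduced the intended argument rather than diverged from it. The proof is correct: $\psi(y) = \mathbb{E}_g[\phi(g(y))]$ is a valid $[0,1]$-valued test on $\mathcal{R}$ with the same type~I and type~II errors against $(P,Q)$ as $\phi$ has against the pushforwards, which yields the pointwise inequality $T(g_*P, g_*Q) \geq T(P,Q)$ and chains with the $\mu$-GDP hypothesis exactly as you describe.
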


\section{Algorithm}
\label{sec:algo}

In this section we present our main contribution. 
The pseudo-code for our mechanism is in Algorithm~\ref{alg:correlatedGaussian}.
Throughout the section we present multiple equivalent representations of our mechanism.
Each of these representations has its benefits.
The error guarantees are trivial to analyze in first representation while the privacy guarantees are easier to analyze in the second.
We present a third representation of the mechanism in Corollary~\ref{cor:covariance-matrix}, and finally in Section~\ref{sec:reduced-correlated-noise} we use yet another representation when discussing the trade-off between two sources of error.
It is sometimes subjective which representation is preferred for a specific task. 
We hope that by presenting our work in multiple ways, the core idea behind our technique will become more intuitive to readers with various backgrounds.

\begin{figure}[H]
    \centering
    \begin{minipage}{0.75\linewidth}    
    \begin{algorithm2e}[H]
        \label{alg:correlatedGaussian}
        \caption{Gaussian Mechanism using Correlated Noise}
        \SetKwInOut{Input}{Input}
        \SetKwInOut{Output}{Output}
        
        \Input{Dataset $X = (x_1, x_2, \dots, x_{n})$ where $x_i \in [0,1]^d$.}
        \Output{$\mu$-GDP estimates of $f(X) \coloneqq \sum_{i=1}^n x_i$ and $n$ under add/remove.}

        Sample $\eta \sim \mathcal{N}(0, \frac{\sqrt{d} + 1}{4 \mu^2})$.
        
        \ForEach{$i \in [d]$}
        {
          Sample $z_i \sim \mathcal{N}(0, \frac{d + \sqrt{d}}{4\mu^2})$. \\
          Let $\tilde{x}_i \leftarrow f(X)_i + \eta + z_i$. \label{line:add-noise-samples} 
        }
        
        \Return{$\tilde{x}, n + 2\eta$.}
    \end{algorithm2e}
    \end{minipage}    
\end{figure}

The motivating problem of our work is to release $d$ counting queries under differential privacy.
However, we present our mechanism in the more general setting of answering $d$ queries with a real value from the interval $[0,1]$.
That is, a dataset $X$ 
consists of $n$ data points such that $x_i \in [0, 1]^d$. 
A pair of datasets, $X \sim X'$, are neighboring if we can obtain $X'$ from $X$ by either adding or removing one data point.
As such, we either have that (1) $\vert X' \vert = n - 1$ and $f(X) - f(X') \in [0,1]^d$ or (2) $\vert X' \vert = n + 1$ and $f(X) - f(X') \in [-1,0]^d$.

The core idea of our mechanism is to take advantage of the structure of this sensitivity space.
The $\ell_2$ sensitivity of $f$ is $\sqrt{d}$.
However, notice that the vector $f(X) - f(X')$ always has $\ell_2$ distance at most $\sqrt{d}/2$ to either ${\bf\frac{1}{2}}^d$ or $-{\bf\frac{1}{2}}^d$ (recall that ${\bf\frac{1}{2}}^d$ denotes the $d$ dimensional vector where all coordinates are $1/2$).
Our mechanism spends part of the privacy budget 
on adding noise along ${\bf 1}^d$. 
The idea is to hide the difference between $f(X) - {\bf\frac{1}{2}}^d$, $f(X)$, and $f(X) + {\bf\frac{1}{2}}^d$.
This allows us to add noise with smaller magnitude to each query.
A visualization of the geometric intuition behind this idea for $d=2$ is shown in Figure~\ref{fig:geometric-intuition}.
Next, we discuss the properties of our mechanism.

\begin{figure}[t]
    \centering
    \includegraphics[width=0.975\linewidth]{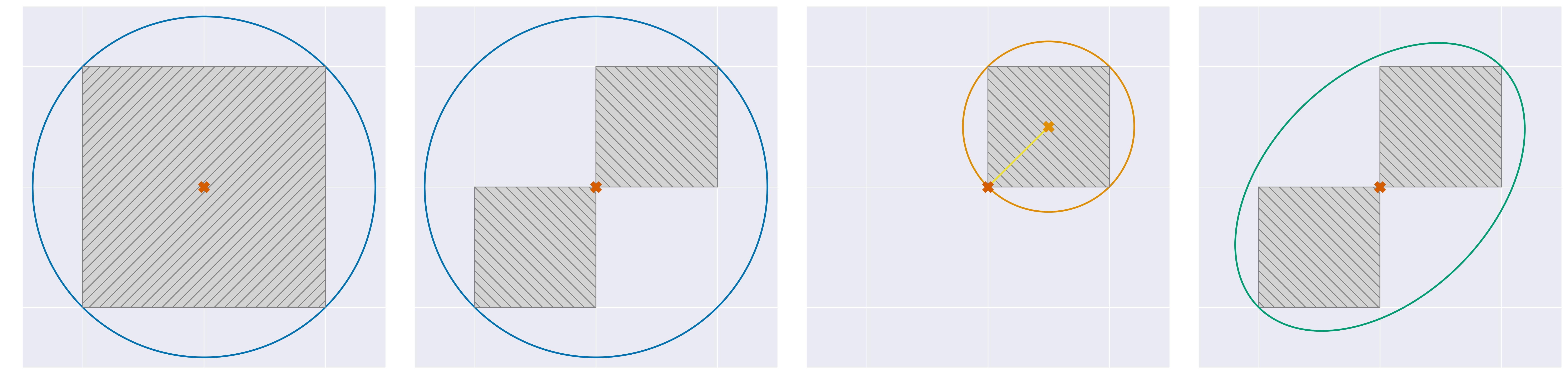} 
    \caption{2D geometric intuition of our technique. 
    Each data point is in $[0,1] \times [0,1]$. 
    The red cross marks $f(X)$ in all figures. 
    The two left figures depict the shape of noise for the standard Gaussian mechanism. 
    Informally, the mechanism is private if $f(X')$ always is inside the blue circle with radius $\sqrt{2}$.
    The value of $f(X')$ under the replacement neighboring relation can be anywhere in the grey box in the first figure.
    That is, the box shows the result of adding any point from the sensitivity space to $f(X)$.
    The boxes in the second figure similarly depict the sensitivity space under the add/remove neighboring relation.
    In general, there are $2^d$ possible values for $f(X')$ that touch the hypersphere under replacement, but only $2$ such values under add/remove for any $d$.
    In the third figure we focus on the case where we add a data point to $X$ to construct $X'$.
    Notice that the orange circle with radius $\sqrt{2}/2$ centered at $f(X) + (0.5, 0.5)$ contains the box.
    We spend part of the privacy budget to add noise along the yellow diagonal.
    The resulting noise is elliptical and 
    contains the sensitivity space under add/remove as seen in the last figure.
    } 
    \label{fig:geometric-intuition}
\end{figure}

\begin{lemma}
    \label{lem:alg1-error}
    Algorithm~\ref{alg:correlatedGaussian} estimates each query with error distributed as $\normal{\frac{d + 2\sqrt{d} + 1}{4\mu^2}}$.
\end{lemma}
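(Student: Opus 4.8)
The plan is to read off the error random variable directly from the algorithm and then invoke the standard fact that a sum of independent Gaussians is Gaussian. From Line~\ref{line:add-noise-samples} the output in coordinate $i$ is $\tilde{x}_i = f(X)_i + \eta + z_i$, so the error of query $i$ is exactly
\[
    \tilde{x}_i - f(X)_i = \eta + z_i \enspace .
\]
Thus there is nothing to estimate beyond the distribution of this sum of two noise terms.

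Next I would record the two facts that make the sum well-behaved. First, $\eta$ is drawn once (before the loop) and each $z_i$ is drawn freshly inside the loop, so $\eta$ and $z_i$ are independent zero-mean Gaussians, with $\eta \sim \normal{\frac{\sqrt{d}+1}{4\mu^2}}$ and $z_i \sim \normal{\frac{d+\sqrt{d}}{4\mu^2}}$. Second, the convolution of two independent zero-mean Gaussians is again a zero-mean Gaussian whose variance is the sum of the two variances. Applying this gives that $\eta + z_i$ is zero-mean Gaussian with variance
\[
    \frac{\sqrt{d}+1}{4\mu^2} + \frac{d+\sqrt{d}}{4\mu^2}
    = \frac{(\sqrt{d}+1) + (d+\sqrt{d})}{4\mu^2}
    = \frac{d + 2\sqrt{d} + 1}{4\mu^2} \enspace ,
\]
which is precisely the claimed distribution $\normal{\frac{d + 2\sqrt{d} + 1}{4\mu^2}}$.

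Honestly, there is no real obstacle here: the statement is immediate once the error is identified as $\eta + z_i$, and the only point requiring any care is confirming the independence of $\eta$ and $z_i$ so that the variances add rather than picking up a covariance cross-term. I would note in passing that the same error distribution holds for every coordinate $i$ (the marginal is identical across queries, even though the coordinates share the common term $\eta$ and are therefore \emph{not} mutually independent), since the lemma only concerns the per-query marginal. The joint dependence introduced by the shared $\eta$ is exactly what the privacy analysis in the subsequent sections exploits, but it plays no role in this marginal error claim.
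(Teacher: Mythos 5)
Your proof is correct and follows essentially the same argument as the paper's: identify the per-query error as $\eta + z_i$, use independence of the two zero-mean Gaussian samples, and add the variances to obtain $\frac{d + 2\sqrt{d} + 1}{4\mu^2}$. Your added remark that the claim concerns only the per-coordinate marginal (despite the shared $\eta$ inducing dependence across coordinates) is a correct and worthwhile clarification, but it does not change the substance of the argument.
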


\begin{proof}
    It is easy to see on Line~\ref{line:add-noise-samples} of Algorithm~\ref{alg:correlatedGaussian} that the error for the estimate of the $i$'th  query depends only on the value of $\eta$ and $z_i$ for any $i \in [d]$.
    Since $\eta$ and $z_i$ are independent samples from zero-mean Gaussians their sum is distributed as a zero-mean Gaussian with variance $\frac{d + \sqrt{d}}{4\mu^2} + \frac{\sqrt{d} + 1}{4\mu^2} = \frac{d + 2\sqrt{d} + 1}{4\mu^2}$.
    Furthermore, the standard deviation is $\left(({d + 2\sqrt{d} + 1})/({4\mu^2})\right)^{1/2} = (\sqrt{d} + 1)/(2\mu)$.
\end{proof}

It remains for us to prove the privacy guarantees of Algorithm~\ref{alg:correlatedGaussian}.
To do so, we first introduce another mechanism.
We use a simple injective function to map each data point of $X$ to a point in $\mathbb{R}^{d + 1}$.
We then release an estimate of the sum of all data points in this new representation under $\mu$-GDP using the standard Gaussian mechanism.
Finally, we prove that Algorithm~\ref{alg:correlatedGaussian} is equivalent to post-processing the output of this new mechanism for a fixed parameter.

We construct from $X$ a new dataset $\hat{X} = (\hat{x}_1, \hat{x}_2, \dots, \hat{x}_n)$ of $n$ data points where $\hat{x}_i \in \mathbb{R}^{d + 1}$.
For each $i \in [n]$ and $j \in [d]$ we define $\hat{x}_{i,j} = 2 \cdot x_{i, j} - 1$ and we let $\hat{x}_{i, d + 1} = C$ for some parameter $C > 0$.
Finally, we define $g$ as the function that computes the sum of all data points in this new representation such that $g(X) = \sum_{i = 1}^n \hat{x}_i$.
The following two lemmas show that we can estimate $g(X)$ under $\mu$-GDP using the standard Gaussian mechanism.

\begin{lemma}
    \label{lem:l2sens-in-injection}
    The $\ell_2$ sensitivity of $g$ is $\Delta g = \sqrt{d + C^2}$.
\end{lemma}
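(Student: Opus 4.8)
The plan is to compute the $\ell_2$ norm of a single transformed data point and maximize it over the admissible domain. Under the add/remove neighboring relation, any pair $X \sim X'$ differs by exactly one data point, so $g(X) - g(X')$ equals $\pm \hat{x}_i$, where $\hat{x}_i$ is the transformed version of the point $x_i$ that is added or removed. Hence $\Delta g = \max_{x \in [0,1]^d} \|\hat{x}\|_2$, where $\hat{x} = (2x_1 - 1, \dots, 2x_d - 1, C)$ is the image of $x$ under the injection.

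First I would write the squared norm explicitly as
\[
    \|\hat{x}\|_2^2 = \sum_{j=1}^d (2x_j - 1)^2 + C^2 ,
\]
noting that the coordinate in position $d + 1$ always contributes the fixed term $C^2$ and is independent of the choice of data point. It therefore remains only to maximize the sum over the first $d$ coordinates.

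Next I would maximize term by term. Since each $x_j \in [0,1]$, the value $2x_j - 1$ ranges over $[-1,1]$, so $(2x_j - 1)^2 \le 1$ with equality precisely when $x_j \in \{0,1\}$. Summing over $j$ gives $\sum_{j=1}^d (2x_j - 1)^2 \le d$, and this bound is attained, for instance by the all-zeros or all-ones data point. Combining the two observations yields $\|\hat{x}\|_2^2 \le d + C^2$ with equality achievable, so $\Delta g = \sqrt{d + C^2}$.

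There is no genuine obstacle here: the argument is a one-line optimization of a separable function over a box, whose maximum is attained at the vertices. The only point worth stating with care is that the maximum is \emph{achieved} rather than merely upper-bounded, which is what lets us conclude that the sensitivity equals $\sqrt{d + C^2}$ exactly and not just at most this value.
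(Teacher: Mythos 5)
Your proposal is correct and follows essentially the same route as the paper: reduce the sensitivity to the $\ell_2$ norm of a single transformed point $\hat{x}_i$, then bound each of the first $d$ coordinates by $1$ in absolute value and add the fixed $C^2$ contribution. The one small difference is that you also verify the bound is \emph{attained} (at a vertex of the box, e.g.\ the all-ones data point), whereas the paper's proof only states the upper bound $\|g(X)-g(X')\|_2 \leq \sqrt{d+C^2}$; your extra step is what actually justifies the equality $\Delta g = \sqrt{d+C^2}$ claimed in the lemma.
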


\begin{proof}
    Fix any pair of neighboring datasets $X \sim X'$ and let $x_i$ denote the data point that is either added to or removed from $X$ to obtain $X'$.
    It is easy to see that either $g(X) - g(X') = \hat{x}_i$ or $g(X') - g(X) = \hat{x}_i$.
    By the definition of $\hat{x}_i$ we have that $\hat{x}_{i,j}$ is in the interval $[-1, 1]$ for any $j \in [d]$.
    The $\ell_2$ sensitivity of $g$ is thus bounded by 
    $\|g(X) - g(X')\|_2 = \|\hat{x}_i\|_2 = \sqrt{\sum_{j = 1}^{d + 1} \hat{x}_{i,j}^2} = \sqrt{\sum_{j = 1}^{d} (2 \cdot x_{i,j} - 1)^2 + C^2} \leq \sqrt{\sum_{j = 1}^d 1^2 + C^2} = \sqrt{d + C^2}$.
\end{proof}

\begin{lemma}
    \label{lem:release-of-injection}
    The mechanism that outputs $g(X) + Z$ where $Z \sim \normal{\frac{d + C^2}{\mu^2}I_{d + 1}}$ satisfies $\mu$-GDP.
\end{lemma}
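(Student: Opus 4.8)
The plan is to obtain this statement as an immediate instantiation of the standard Gaussian mechanism, since all the substantive work has already been done in Lemma~\ref{lem:l2sens-in-injection}. For any fixed parameter $C > 0$, the map $X \mapsto g(X) = \sum_{i=1}^n \hat{x}_i$ is a deterministic function $g \colon \mathcal{U}^\mathbb{N} \rightarrow \mathbb{R}^{d+1}$, so Lemma~\ref{lem:gaussian-mech-GDP} applies to it directly with dimension $d+1$ in place of $d$.

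First I would recall from Lemma~\ref{lem:l2sens-in-injection} that the $\ell_2$ sensitivity of $g$ is $\Delta g = \sqrt{d + C^2}$, so that $(\Delta g)^2 = d + C^2$. Lemma~\ref{lem:gaussian-mech-GDP} then guarantees that adding noise $Z \sim \mathcal{N}(0, (\Delta g)^2/\mu^2\, I_{d+1})$ to $g(X)$ yields a $\mu$-GDP mechanism. Substituting the computed sensitivity gives a covariance of $\frac{d + C^2}{\mu^2} I_{d+1}$, which is exactly the distribution $Z \sim \normal{\frac{d + C^2}{\mu^2} I_{d+1}}$ appearing in the statement, and the output dimension $d+1$ matches the codomain of $g$. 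This completes the argument.

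There is essentially no obstacle here: the lemma is a bookkeeping step that packages the sensitivity computation of the transformed representation into a ready-to-use privacy guarantee. The only points worth verifying are that $g$ is a genuine function of the dataset (the quantity $C$ is a fixed constant, not part of the input, so $g$ is well defined), and that the identity matrix carries the correct dimension $d+1$ rather than $d$; both hold by construction. I would therefore expect the formal proof to be a single sentence combining Lemma~\ref{lem:l2sens-in-injection} and Lemma~\ref{lem:gaussian-mech-GDP}.
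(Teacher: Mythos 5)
Your proposal is correct and matches the paper's proof exactly: the paper likewise proves Lemma~\ref{lem:release-of-injection} as an immediate consequence of Lemmas~\ref{lem:gaussian-mech-GDP} and~\ref{lem:l2sens-in-injection}, with your write-up simply spelling out the substitution $(\Delta g)^2 = d + C^2$ and the dimension bookkeeping that the paper leaves implicit.
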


\begin{proof}
    It follows directly from Lemmas~\ref{lem:gaussian-mech-GDP} and~\ref{lem:l2sens-in-injection}.
\end{proof}

Next, we show that Algorithm~\ref{alg:correlatedGaussian} and the mechanism from Lemma~\ref{lem:release-of-injection} are equivalent under post-processing when setting $C = d^{1/4}$.
We set $C = d^{1/4}$ because that choice of $C$ minimizes the error for each query.
We discuss the effect of using other values for the parameter in more detail in Section~\ref{sec:reduced-correlated-noise}.

We recover a private estimate of $f(X)$ from the private estimate of $g(X)$ by inverting the transformation we performed when constructing $\hat{X}$ from $X$.
Notice that $f(X)_i = (g(X)_i - g(X)_{d+1}/C)/2$ which allows us to estimate $f(X)_i$ using the private estimates of $g(X)_i$ and $g(X)_{d+1}$.
Since post-processing does not affect the privacy guarantees (Lemma~\ref{lem:post-processing}), 
it follows directly from Lemma~\ref{lem:release-of-injection} and \ref{lem:equivalent-under-post-processing} that Algorithm~\ref{alg:correlatedGaussian} satisfies $\mu$-GDP.

\begin{lemma}
    \label{lem:equivalent-under-post-processing}
    Let $\mathcal{M}$ denote Algorithm~\ref{alg:correlatedGaussian} and let $\mathcal{M}'$ denote the mechanism described in Lemma~\ref{lem:release-of-injection} with parameter $C = d^{1/4}$.
    Then there exists a bijection $h \colon \mathbb{R}^d \rightarrow \mathbb{R}^{d + 1}$ such that $h(\mathcal{M}(X))$ is distributed as $\mathcal{M}'(X)$ and $h^{-1}(\mathcal{M}'(X))$ is distributed as $\mathcal{M}(X)$.
\end{lemma}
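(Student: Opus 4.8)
The plan is to exhibit the bijection explicitly as an invertible linear map and then verify, by direct computation with Gaussian vectors, that it transports the law of $\mathcal{M}(X)$ onto that of $\mathcal{M}'(X)$. First I would write both outputs in coordinates. With $C = d^{1/4}$ we have $C^2 = \sqrt{d}$, so expanding $g(X) = \sum_{i=1}^n \hat{x}_i$ gives $g(X)_j = 2 f(X)_j - n$ for $j \in [d]$ and $g(X)_{d+1} = n\, d^{1/4}$, while every coordinate of the noise $Z$ in $\mathcal{M}'$ has variance $(d + \sqrt{d})/\mu^2$. Denoting the output of $\mathcal{M}(X)$ by $(\tilde{x}_1, \dots, \tilde{x}_d, m)$ with $m = n + 2\eta$, I would propose the map
\[
h(\tilde{x}_1, \dots, \tilde{x}_d, m) = \left(2\tilde{x}_1 - m,\; \dots,\; 2\tilde{x}_d - m,\; d^{1/4} m\right) .
\]
This $h$ is linear and its matrix is triangular with diagonal entries $2, \dots, 2, d^{1/4}$, so its determinant is $2^d d^{1/4} \neq 0$ and $h$ is a bijection, with inverse given by $m = Y_{d+1}/d^{1/4}$ and $\tilde{x}_j = (Y_j + Y_{d+1}/d^{1/4})/2$.

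The key algebraic step is that applying $h$ to $\mathcal{M}(X)$ cancels the shared noise term $\eta$ in the first $d$ coordinates. Since $2\tilde{x}_j - m = 2(f(X)_j + \eta + z_j) - (n + 2\eta) = (2 f(X)_j - n) + 2 z_j$, coordinate $j$ carries only the independent noise $2 z_j$, which has variance $4 \cdot (d + \sqrt{d})/(4\mu^2) = (d + \sqrt{d})/\mu^2$. Likewise the last coordinate becomes $d^{1/4} m = n\, d^{1/4} + 2 d^{1/4}\eta$, whose noise $2 d^{1/4}\eta$ has variance $4\sqrt{d} \cdot (\sqrt{d}+1)/(4\mu^2) = (d + \sqrt{d})/\mu^2$. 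Because $z_1, \dots, z_d, \eta$ are mutually independent, all $d+1$ noise terms of $h(\mathcal{M}(X))$ are independent with common variance $(d + \sqrt{d})/\mu^2$, and their means are exactly the coordinates of $g(X)$. Hence $h(\mathcal{M}(X))$ has precisely the law of $\mathcal{M}'(X)$.

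Since $h$ is a bijection, the same computation run backwards shows $h^{-1}(\mathcal{M}'(X))$ is distributed as $\mathcal{M}(X)$: the inverse sends the independent coordinates of $\mathcal{M}'$ to $\tilde{x}_j = f(X)_j + z_j' + \eta'$ and $m = n + 2\eta'$ with $z_j' = Z_j/2$ (variance $(d+\sqrt{d})/(4\mu^2)$) and $\eta' = Z_{d+1}/(2 d^{1/4})$ (variance $(\sqrt{d}+1)/(4\mu^2)$), reproducing the shared-$\eta$ correlation structure of Algorithm~\ref{alg:correlatedGaussian}. I expect the only delicate point to be the bookkeeping of which linear combination simultaneously decorrelates the first $d$ coordinates (by eliminating $\eta$) and matches the prescribed variances; once the coefficients $2$ and $d^{1/4}$ are pinned down by these two requirements, the remaining check that all cross-covariances vanish is immediate from the independence of the underlying samples, so matching means and variances suffices to conclude equality of the two Gaussian laws.
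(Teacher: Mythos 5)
Your proof is correct and follows essentially the same route as the paper: the same linear map $h$ (sending $(\tilde{x}, \tilde{n})$ to $(2\tilde{x}_1 - \tilde{n}, \dots, 2\tilde{x}_d - \tilde{n}, d^{1/4}\tilde{n})$), the same cancellation of the shared noise $\eta$, and the same variance checks showing all $d+1$ coordinates become independent $\mathcal{N}(0, (d+\sqrt{d})/\mu^2)$ noise centered at $g(X)$, with the identical inverse map for the reverse direction. Your explicit triangular-determinant argument for invertibility is a small addition the paper leaves implicit, but the substance is the same.
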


\begin{proof}
    By definition of the function $g$ we have that $g(X)_i = 2 \cdot f(X)_i - n$ for all $i \in [d]$ and $g(X)_{d + 1} = C n = d^{1/4}n$.
    Let $\tilde{x}, \tilde{n}$ denote the output of $\mathcal{M}(X)$.
    Then for $i \in [d]$ we estimate the $i$'th value of $g(X)$ as 
    \[
    h(\mathcal{M}(X))_i = 2 \cdot \tilde{x}_i - \tilde{n} = 2 \cdot (f(X)_i + \eta + z_i) - (n + 2\eta) = g(X) + 2 z_i \,.
    \]
    Since $z_i \sim \normal{\frac{d + \sqrt{d}}{4\mu^2}}$ we have that $2 \cdot z_i$ and $Z_i$ are both distributed as $\normal{\frac{d + C^2}{\mu^2}} = \normal{\frac{d + \sqrt{d}}{\mu^2}}$.
    We can estimate the final entry of $g(X)$ as 
    \[
    h(\mathcal{M}(X))_{d + 1} = d^{1/4}\tilde{n}=d^{1/4}(n + 2\eta) = g(X)_{d + 1} + d^{1/4}2\eta \,.
    \]
    Both $d^{1/4}2\eta$ and $Z_{d + 1}$ are distributed as $\normal{\frac{d + \sqrt{d}}{\mu^2}}$.
    Since the noise terms of all these estimates are independent $h(\mathcal{M}(X))$ is distributed as $\mathcal{M}'(X)$.
    
    Similarly, if $\tilde{y}$ denotes the output of $\mathcal{M}'(X)$
    we can estimate $f(X)$ and $n$ using the inverse of $h$.
    We estimate the value of $n$ as 
    \[
    h^{-1}(\mathcal{M}'(X))_{d + 1}= \tilde{y}_{d + 1}/d^{1/4} = (g(X)_{d+1} + Z_{d+1})/d^{1/4} = n + Z_{d + 1}/d^{1/4} \,,
    \] 
    and we estimate the $i$'th entry of $f(X)$ as 
    \[
    h^{-1}(\mathcal{M}'(X))_{i} = \tilde{y}_i / 2 + \tilde{y}_{d + 1}/(2d^{1/4}) = (g(X)_{i} + Z_i) / 2 + (g(X)_{d+1} + Z_{d+1})/(2d^{1/4}) = f(X)_i + Z_i / 2 + Z_{d + 1}/(2d^{1/4}) \,,
    \]
    where $Z_i / 2$ is distributed as $\normal{\frac{d + \sqrt{d}}{4\mu^2}}$.
    As such, $h^{-1}(\mathcal{M}'(X))$ has the same distribution as $\mathcal{M}(X)$ where $Z_{d + 1}/(2d^{1/4}) = \eta$ is the correlated noise added to all entries in Algorithm~\ref{alg:correlatedGaussian}.
\end{proof}

We are now ready to summarize the properties of our mechanism.

\begin{theorem}
    \label{thm:main-theorem}
    Let $f(X) \coloneqq \sum_{i = 1}^n x_i$ denote the sum of the rows in a dataset $X \in [0,1]^{n \times d}$. 
    The mechanism that outputs $f(X) + \eta {\bf 1}^d + Z$ and $n + 2\eta$ where $\eta \sim \normal{\frac{\sqrt{d} + 1}{4\mu^2}}$ and $Z \sim \normal{\frac{d + \sqrt{d}}{4\mu^2}I_d}$ satisfies $\mu$-GDP  under the add/remove neighboring relation.
    The error of each query is distributed as a zero-mean Gaussian with standard deviation $(\sqrt{d} + 1)/(2\mu)$.
\end{theorem}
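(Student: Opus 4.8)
The plan is to recognize that the mechanism in the statement is exactly Algorithm~\ref{alg:correlatedGaussian}: the output $f(X) + \eta{\bf 1}^d + Z$ has $i$-th coordinate $f(X)_i + \eta + z_i$ with $z_i = Z_i \sim \normal{\frac{d+\sqrt{d}}{4\mu^2}}$, and the second output is $n + 2\eta$. Both claims of the theorem — the privacy guarantee and the per-query error distribution — therefore reduce to assembling the lemmas already established, and no computation beyond bookkeeping is required.

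For the error claim I would invoke Lemma~\ref{lem:alg1-error} directly: each coordinate's noise is the sum of the independent zero-mean Gaussians $\eta$ and $z_i$, hence distributed as $\normal{\frac{d+2\sqrt{d}+1}{4\mu^2}}$, whose standard deviation is $(\sqrt{d}+1)/(2\mu)$. This is immediate and requires nothing further.

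For the privacy claim I would proceed in three short steps. First, instantiate the injection mechanism $\mathcal{M}'$ of Lemma~\ref{lem:release-of-injection} with $C = d^{1/4}$; the added noise then has variance $\frac{d+C^2}{\mu^2} = \frac{d+\sqrt{d}}{\mu^2}$ per coordinate, and the mechanism satisfies $\mu$-GDP. Second, apply Lemma~\ref{lem:equivalent-under-post-processing}, which supplies a bijection $h$ such that $h^{-1}(\mathcal{M}'(X))$ is distributed identically to $\mathcal{M}(X)$; concretely, $h^{-1}$ recovers $f(X)_i$ via $\tilde{y}_i/2 + \tilde{y}_{d+1}/(2d^{1/4})$ and $n$ via $\tilde{y}_{d+1}/d^{1/4}$. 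Third, since $h^{-1}$ is a deterministic post-processing map, Lemma~\ref{lem:post-processing} gives that $h^{-1}\circ\mathcal{M}'$ — and hence $\mathcal{M}$ — satisfies $\mu$-GDP under add/remove.

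The step carrying all the content is the equivalence under post-processing (Lemma~\ref{lem:equivalent-under-post-processing}), which is already proven; the only point I would double-check is that the substitution $C = d^{1/4}$ makes the injection noise variance $\frac{d+\sqrt{d}}{\mu^2}$ coincide exactly with that of the scaled noise $2z_i$ (and $d^{1/4}\cdot 2\eta$) appearing after applying $h$, so that the distributional identification is an exact equality rather than an approximation. Given that match, the theorem follows by direct assembly with no genuine obstacle remaining.
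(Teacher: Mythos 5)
Your proposal is correct and follows essentially the same route as the paper: the paper's proof of Theorem~\ref{thm:main-theorem} likewise identifies the mechanism with Algorithm~\ref{alg:correlatedGaussian} and cites Lemmas~\ref{lem:alg1-error}, \ref{lem:release-of-injection}, and \ref{lem:equivalent-under-post-processing}, with the post-processing step (your Lemma~\ref{lem:post-processing} invocation) stated in the surrounding text. Your explicit check that $C = d^{1/4}$ makes the variances $\frac{d + C^2}{\mu^2} = \frac{d+\sqrt{d}}{\mu^2}$ match those of $2z_i$ and $d^{1/4}\cdot 2\eta$ exactly is the same verification carried out inside the paper's proof of Lemma~\ref{lem:equivalent-under-post-processing}.
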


\begin{proof}
    The mechanism is Algorithm~\ref{alg:correlatedGaussian}. 
    The error guarantees follow from Lemma~\ref{lem:alg1-error}.
    The privacy guarantees follow from Lemmas~\ref{lem:release-of-injection}, and \ref{lem:equivalent-under-post-processing}.
\end{proof}

Algorithm~\ref{alg:correlatedGaussian} relies on $d + 1$ independent samples from Gaussian distributions.
Alternatively, we can view the mechanism as outputting an estimate of $f(X)$ and $n$ by sampling a $d + 1$ dimensional multivariate Gaussian distribution.
The entries of the covariance matrix contain one of $4$ distinct values depending on (1) whether or not it is on the diagonal and (2) whether or not it is in the final row or column.
This representation might be useful for data analysts for estimating the effect of noise.
We use this representation again in Section~\ref{sec:reduced-correlated-noise} where we discuss different values for the parameter $C$.

\begin{corollary}
    \label{cor:covariance-matrix}
    Let $X$, $n$, and $f(X)$ be defined as in Theorem~\ref{thm:main-theorem} except that $f(X)_{d + 1} = n$.
    Then the mechanism that outputs $f(X) + Z$ where $Z \sim \normal{\Sigma/\mu^2}$ satisfies $\mu$-GDP under the add/remove neighboring relation where the covariance matrix $\Sigma \in \mathbb{R}^{(d + 1) \times (d + 1)}$ has the following structure
    \begingroup
    \renewcommand*{\arraystretch}{1.5}
    \[
        \Sigma = 
        \begin{bmatrix}
            \frac{d + 2\sqrt{d} + 1}{4} & \frac{\sqrt{d} + 1}{4} & \dots  & \frac{\sqrt{d} + 1}{4} & \frac{\sqrt{d} + 1}{2} \\
            \frac{\sqrt{d} + 1}{4} & \frac{d + 2\sqrt{d} + 1}{4} & \dots  & \frac{\sqrt{d} + 1}{4} & \frac{\sqrt{d} + 1}{2} \\
            \vdots & \vdots & \ddots  & \vdots & \vdots \\
            \frac{\sqrt{d} + 1}{4} & \frac{\sqrt{d} + 1}{4} & \dots & \frac{d + 2\sqrt{d} + 1}{4} & \frac{\sqrt{d} + 1}{2} \\
            \frac{\sqrt{d} + 1}{2} & \frac{\sqrt{d} + 1}{2} & \dots & \frac{\sqrt{d} + 1}{2} & \sqrt{d} + 1
        \end{bmatrix}    
    \]
    \endgroup
\end{corollary}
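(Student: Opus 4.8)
The plan is to observe that the mechanism described in the corollary is literally Algorithm~\ref{alg:correlatedGaussian} with the released estimate of $n$ reindexed into coordinate $d+1$, so the $\mu$-GDP guarantee under add/remove is inherited directly from Theorem~\ref{thm:main-theorem}. The only thing that actually requires verification is that the stated matrix $\Sigma/\mu^2$ is the correct covariance of the joint noise vector.

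First I would assemble the noise added by Algorithm~\ref{alg:correlatedGaussian} into a single $(d+1)$-dimensional random vector $Z$. Reading off Line~\ref{line:add-noise-samples} and the returned estimate $n + 2\eta$, the noise in coordinate $i \in [d]$ is $\eta + z_i$ and the noise in coordinate $d+1$ is $2\eta$, where $\eta \sim \normal{\frac{\sqrt{d}+1}{4\mu^2}}$ and the $z_i \sim \normal{\frac{d+\sqrt{d}}{4\mu^2}}$ are mutually independent and independent of $\eta$. Because $Z$ is a fixed linear image of the independent Gaussian vector $(\eta, z_1, \dots, z_d)$, it is itself jointly Gaussian with zero mean, so it suffices to compute its covariance entrywise.

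The covariance computation is routine via bilinearity and independence. Writing $a = \operatorname{Var}(\eta) = \frac{\sqrt{d}+1}{4\mu^2}$ and $b = \operatorname{Var}(z_i) = \frac{d+\sqrt{d}}{4\mu^2}$, I would record four cases. A diagonal entry among the first $d$ coordinates is $a + b = \frac{d + 2\sqrt{d} + 1}{4\mu^2}$; an off-diagonal pair among the first $d$ coordinates shares only $\eta$, giving $a = \frac{\sqrt{d}+1}{4\mu^2}$; the covariance between a coordinate $i \le d$ and coordinate $d+1$ is $\operatorname{Cov}(\eta + z_i,\, 2\eta) = 2a = \frac{\sqrt{d}+1}{2\mu^2}$; and the variance of coordinate $d+1$ is $\operatorname{Var}(2\eta) = 4a = \frac{\sqrt{d}+1}{\mu^2}$. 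Factoring out $1/\mu^2$ reproduces exactly the matrix $\Sigma$ displayed in the statement.

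Finally I would conclude. Since the output $f(X) + Z$ with $f(X)_{d+1} = n$ has precisely the joint distribution of the pair $(\tilde{x},\, n + 2\eta)$ returned by Algorithm~\ref{alg:correlatedGaussian}, and that algorithm satisfies $\mu$-GDP under add/remove by Theorem~\ref{thm:main-theorem}, the corollary follows immediately. There is no genuine obstacle here: the content is just the bookkeeping of the covariance entries, and the only point requiring care is keeping the two sources of correlation straight—every pair of output coordinates is correlated solely through the shared variable $\eta$, which is what yields the rank-one off-diagonal block among the first $d$ coordinates and the doubled entries in the last row and column arising from the factor $2$ in $n + 2\eta$.
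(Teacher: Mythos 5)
Your proposal is correct and matches the paper's intent exactly: the paper states this corollary without a written proof, relying on the preceding remark that Algorithm~\ref{alg:correlatedGaussian} can be viewed as adding a single $(d+1)$-dimensional multivariate Gaussian, and your argument—identifying the joint noise vector $(\eta+z_1,\dots,\eta+z_d,2\eta)$ as a linear image of independent Gaussians and computing the four distinct covariance entries—is precisely the bookkeeping the paper leaves implicit. The privacy conclusion via distributional identity with Algorithm~\ref{alg:correlatedGaussian} and Theorem~\ref{thm:main-theorem} is also the intended route, so there is nothing to correct.
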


\paragraph*{Note for practitioners}
In this section we have presented multiple representations of our mechanism.
In Lemma~\ref{lem:equivalent-under-post-processing} we showed that Algorithm~\ref{alg:correlatedGaussian} is equivalent to applying the standard Gaussian mechanism to a different representation of the dataset which we denote $\hat{X}$.
However, both of these mechanisms are defined for arbitrary real-valued input and output, and as such any implementation of either representation must be an approximation of the idealized algorithm.

Real values are often approximated in computers using floating-point representation and it is well-known that naive implementations can break the formal privacy guarantees~\cite{mironov-floating-point-issues}.
Additive noise mechanisms such as the Laplace mechanism~\cite{dworkCalibratingNoise} and the Gaussian mechanism~\cite{DworkKMMN06OurDataOurselves} are typically defined as releasing the sum of a non-private query output and a sample from a distribution.
This can be problematic if the mechanism is implemented using floating-point arithmetics because the output domains might differ between neighboring datasets~(see \cite[Theorem~2]{DP-floating-point-attacks}).
To mitigate this issue \cite{DP-floating-point-attacks} present a technique for mechanisms that add a single noisy value to the true output.

However, in Line~\ref{line:add-noise-samples} of Algorithm~\ref{alg:correlatedGaussian} we add two noise samples to the query output which likely causes similar issues if floating-point representations are used.
To the best of our knowledge, it has not been studied whether implementations of mechanisms that rely on elliptical Gaussian noise suffer from floating-point vulnerabilities.
As such, implementing the version of our mechanism
from Lemma~\ref{lem:release-of-injection} might be preferred for production code.
The values of $f(X)$ and $n$ can then be estimated using post-processing as described in the proof of Lemma~\ref{lem:equivalent-under-post-processing}.
For counting queries and other discrete input we can also use the discrete Gaussian mechanism~\cite{discrete-Gaussian} to estimate $g(X)$, 
although we might need to change the value of $C$ slightly when $d^{1/4}$ is not an integer or rational number.
In Section~\ref{sec:reduced-correlated-noise} we discuss the effect of the value of $C$ on the private estimates.

\subsection{More accurate estimate of the dataset size}
\label{sec:reduced-correlated-noise}

Our mechanism outputs estimates of both $f(X)$ and the size of the dataset $n = \vert X \vert$.
So far we have ignored the accuracy of the estimate for $n$.
However, some tasks require an accurate estimate or $n$.
Here we discuss how we can release a more accurate estimate of $n$ and reduce the correlated noise at the cost of increasing the magnitude of the independent noise added to each query.

An alternative way of viewing our mechanism  
is that instead of estimating the value of each query $f(X)_i$ independently, we privately estimate the difference between $f(X)_i$ and $n/2$ for each $i \in [d]$.
We can estimate this value more accurately 
because $f(X)_i$ has sensitivity $1$ but $f(X)_i - n/2$ only has sensitivity $1/2$. 
Given an estimate of $n/2$, we can then recover an estimate of $f(X)_i$.
Let $\tilde{x}$, $\tilde{n}$ denote the output of Algorithm~\ref{alg:correlatedGaussian}.
If we subtract $\tilde{n}/2$ from all outputs we find that $\tilde{x}_i - \tilde{n}/2 = f(X)_i - n/2 + z_i$ is an estimate of the difference between $f(X)_i$ and $n/2$ and $\tilde{n} - \tilde{n}/2 = n/2 + \eta$ is our estimate of $n/2$.
The value of $n/2$ itself has a sensitivity of $1/2$. 
If we computed a new estimate for $n/2$ for each query the error would match the standard Gaussian mechanism.
The advantage of our approach is that we can reuse the same private estimate of $n/2$ for all $d$ queries.

It is easy to see that
the error for the $i$'th query using this representation depends on our estimates of both $f(X)_i - n/2$ and $n/2$. 
We can balance the amount of noise for each estimate
which is reflected in the parameter $C$ in the representation of $\hat{X}$ introduced earlier in this section.
The magnitude of noise we use to estimate the value of $f(X)_i - n/2$ is exactly $C$ times larger than the noise for estimating $n/2$. 
The following lemma gives the error for the final estimate of $f(X)_i$.

\begin{lemma}
    \label{lem:error-function-of-c}
    Let $\mathcal{M}(X) \coloneqq g(X) + Z$ where $g(X) = \sum_{i = 1}^n \hat{x}_i$ and $Z \sim \normal{\frac{d + C^2}{\mu^2}I_{d+1}}$.
    Then $\mathcal{M}$ satisfies $\mu$-GDP under the add/remove neighboring relation.
    Furthermore, if we estimate the entries of $f(X)$ by post-processing the output of $\mathcal{M}(X)$ similar to Lemma~\ref{lem:equivalent-under-post-processing} the total variance of the noise for each query is $(d + C^2 + d/C^2 + 1)/(4\mu^2)$. 
\end{lemma}

\begin{proof}
    The privacy guarantee follows from Lemma~\ref{lem:release-of-injection}.
    Let $B = (d + C^2)/\mu^2$ and $A = B/C^2 = (d/C^2 + 1)/\mu^2$.
    Then the variance of our estimate of $n/2$ (that is $\mathcal{M}(X)_{d+1}/(2C)$) is $A/4$.
    The variance of $f(X)_i - n/2$ (that is $\mathcal{M}(X)_{i}/2$) is $B/4$.
    If we estimate $f(X)_i$ as $\mathcal{M}(X)_i/2 + \mathcal{M}(X)_{d+1}/(2C)$ then the variance is $(A + B)/4$. 
\end{proof}

Notice that the dependence on $C$ for the variance in Lemma~\ref{lem:error-function-of-c} is $(C^2 + d/C^2)/4$.
Since $C > 0$, the global minimum of $(C^2 + d/C^2)/4$ is at $C=d^{1/4}$ which is the value we use in Lemma~\ref{lem:equivalent-under-post-processing} and for Algorithm~\ref{alg:correlatedGaussian}.
The variable $A$ used in the proof is decreasing in $C$ which implies that we can get a more accurate estimate of the dataset size by increasing the value of $C$.
However, this will in turn increase the error of the estimate of $f(X)_i - n/2$ because the variable $B$ is increasing in $C$.

Figure~\ref{fig:covariance-d-10000} shows the effect of varying $C$ when $d=10000$.
As stated above the total variance of each query is minimized for $C=d^{1/4}=10$.
However, the accuracy of our estimate of $n$ improves as $C$ increases.
The plot shows that we can get a much more accurate estimate of $n$ while still outperforming the standard Gaussian mechanism in terms of the total variance.
Even if we set $C=d^{1/2}=100$ such that we get an estimate of $n$ with variance only $A = 2/\mu^2$ the variance for each query is $(d + 1)/(2\mu^2)= 5000.5/\mu^2$ which is roughly half when compared with $d/\mu^2 = 10000/\mu^2$ for the standard Gaussian mechanism.

For many problems the choice of $C = d^{1/4}$ is preferred because it gives the best estimate of $f(X)_i$.
We get no benefit from setting the parameter $C < d^{1/4}$ because the increased magnitude of the correlated noise results in worse estimates both of $n$ and $f(X)_i$.
However, if we want a better estimate of the dataset size and reduce the magnitude of correlated noise we can increase the value of $C$. 
This can be beneficial for tasks that heavily rely an the accuracy of the estimate for $n$. 
It can also be useful if the downstream impact of correlated noise is either difficult to analyze or control.
Due to diminishing returns we likely never want to use a value of $C$ larger than $\sqrt{d}$.
By setting $C = \sqrt{d}$ we already get an estimate of the dataset size that is independent of $d$ as discussed above.
The optimal choice of $C$ depends on the exact use case.

\begin{figure}[t]
    \centering
    \begin{minipage}{0.59\linewidth} 
    \includegraphics[width=\linewidth]{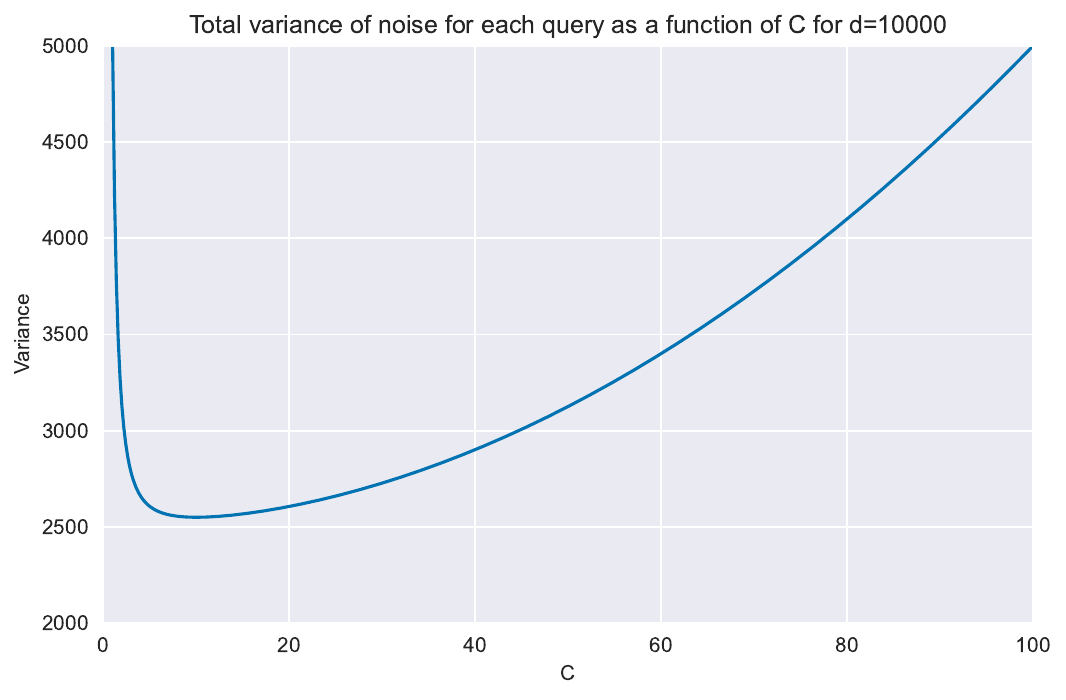} 
    \end{minipage}
    \hfill
    \begin{minipage}{0.40\linewidth}
        \begingroup
        \renewcommand*{\arraystretch}{1.5} 
        \[
        \Sigma = 
        \begin{bmatrix}
            \frac{A + B}{4} & \frac{A}{4} & \dots  & \frac{A}{4} & \frac{A}{2} \\
            \frac{A}{4} & \frac{A + B}{4} & \dots  & \frac{A}{4} & \frac{A}{2} \\
            \vdots & \vdots & \ddots  & \vdots & \vdots \\
            \frac{A}{4} & \frac{A}{4} & \dots & \frac{A + B}{4} & \frac{A}{2} \\
            \frac{A}{2} & \frac{A}{2} & \dots & \frac{A}{2} & A
        \end{bmatrix}
        \]    
        \endgroup
    \end{minipage}
    \caption{
    The plot shows the effect of changing the parameter $C$ for datasets with $d = 10000$.   
    On the right, we show the covariance matrix similar to Corollary~\ref{cor:covariance-matrix}.
    All entries in the matrix should be scaled by $1/\mu^2$.
    For simplicity we consider $\mu = 1$.
    We have that $A = d/C^2 + 1 = 10^4/C^2 + 1$ and $B = d + C^2 = 10^4 + C^2$.
    The plot depicts the value of $(A + B)/4$ as a function of $C$.} 
    \label{fig:covariance-d-10000}
\end{figure}

In this subsection we discussed the trade-off between the private estimate for the size of the input dataset and the error of each query.
Throughout this paper we have assumed that we are not given an estimate of $n$ and we explicitly use part of our privacy budget to find such an estimate.
However, sometimes we actually have access to an estimate of $n$.
This is for example the case if the mechanism is used as a subroutine in a bigger system where we already computed a private estimated the dataset size.
In such cases, we are not required to spend any privacy budget on finding a new estimate.

\begin{lemma}
    \label{lem:already-have-estimate}
    Let $f(X) \coloneqq \sum_{i = 1}^n x_i$ denote the sum of the rows in a dataset $X \in [0,1]^{n \times d}$.
    Assume that we have access to an estimate of the dataset size which we denote $\tilde{n}$.
    Then the mechanism that outputs $f(X) - n{\bf 1}^d/2 + \tilde{n}{\bf 1}^d/2 + Z$  where $Z \sim \normal{\frac{d}{4\mu^2}I_d}$ satisfies $\mu$-GDP under the add/remove neighboring relation.
\end{lemma}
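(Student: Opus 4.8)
The plan is to recognize the stated mechanism as the standard Gaussian mechanism applied to a recentered query of $\ell_2$ sensitivity $\sqrt{d}/2$, followed by a data-independent shift. Since $\tilde{n}$ is assumed to be externally given rather than computed from $X$, the term $\tilde{n}{\bf 1}^d/2$ consumes no privacy budget and can be peeled off as post-processing.

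First I would define the recentered query $h(X) \coloneqq f(X) - n{\bf 1}^d/2$, which is a genuine function of $X$ since $n = \vert X \vert$ is determined by the dataset. The main computation is to bound its $\ell_2$ sensitivity under add/remove, in the same spirit as Lemma~\ref{lem:l2sens-in-injection}. Fix neighbors $X \sim X'$ and let $x \in [0,1]^d$ be the point added or removed. Then for each coordinate $j \in [d]$ the difference $h(X)_j - h(X')_j$ equals $\pm(x_j - 1/2)$, which lies in $[-1/2, 1/2]$. Hence $\|h(X) - h(X')\|_2 \le \sqrt{d \cdot (1/2)^2} = \sqrt{d}/2$, and this bound is attained (e.g.\ by adding ${\bf 1}^d$ or ${\bf 0}^d$), so $\Delta h = \sqrt{d}/2$. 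This is exactly the halving of per-coordinate sensitivity promised in the discussion preceding the lemma: $f(X)_i$ has sensitivity $1$, but $f(X)_i - n/2$ has sensitivity $1/2$.

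With the sensitivity in hand, I would apply Lemma~\ref{lem:gaussian-mech-GDP} to $h$: the mechanism $h(X) + Z$ with $Z \sim \normal{(\Delta h)^2/\mu^2 \, I_d} = \normal{\frac{d}{4\mu^2} I_d}$ satisfies $\mu$-GDP, matching the noise scale in the statement. It then remains to add back the shift $\tilde{n}{\bf 1}^d/2$, so that the output becomes $f(X) - n{\bf 1}^d/2 + \tilde{n}{\bf 1}^d/2 + Z = h(X) + \tilde{n}{\bf 1}^d/2 + Z$. Because $\tilde{n}$ is treated as a fixed constant independent of $X$, the map $y \mapsto y + \tilde{n}{\bf 1}^d/2$ is a deterministic function that does not depend on the data, so applying it is post-processing; by Lemma~\ref{lem:post-processing} the full mechanism still satisfies $\mu$-GDP. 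Equivalently, shifting both output distributions by the same constant leaves the trade-off function in Definition~\ref{def:gdp} unchanged.

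I do not expect a genuine obstacle here: the argument is a clean reduction to the standard Gaussian mechanism, and the only points requiring care are conceptual rather than technical. Specifically, one must (i) verify that centering by $n/2$ truly halves the worst-case coordinate deviation --- the box $[0,1]^d$ is symmetric about ${\bf\frac{1}{2}}^d$, which is what makes $x_j - 1/2$ range over $[-1/2,1/2]$ --- and (ii) be explicit that $\tilde{n}$ is an external estimate, so that the correction term is post-processing rather than an additional query requiring its own budget. Getting assumption (ii) right is precisely the difference between this lemma, which spends no budget on the size, and Theorem~\ref{thm:main-theorem}, which additionally outputs $n + 2\eta$.
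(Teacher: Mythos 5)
Your proof is correct and follows essentially the same route as the paper: the paper's proof is a one-line appeal to Lemma~\ref{lem:gaussian-mech-GDP} applied to the recentered query $g(X) \coloneqq f(X) - n{\bf 1}^d/2$ with $\ell_2$ sensitivity $\sqrt{d}/2$, which is exactly your argument. Your additional details --- the explicit coordinate-wise bound $\pm(x_j - 1/2) \in [-1/2,1/2]$ and the observation that adding $\tilde{n}{\bf 1}^d/2$ is data-independent post-processing via Lemma~\ref{lem:post-processing} --- are steps the paper leaves implicit.
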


\begin{proof}
    It follows from Lemma~\ref{lem:gaussian-mech-GDP} because the $\ell_2$ sensitivity of $g(X) \coloneqq f(X) - n{\bf 1}^d/2$ is $\sqrt{d}/2$.
\end{proof}

\section{Other settings}
\label{sec:extensions}

In the previous section we presented a simple technique for added elliptical noise for the fundamental task of privately answering $d$ queries from $[0,1]$ under the add/remove neighboring relation.
This problem setting was the primary motivation for our work.
However, we believe that our technique has the potential to improve the accuracy of differentially private mechanisms for other settings as well.
To justify this claim we consider an extension of the setting from the previous section.
We improve the variance compared to the standard Gaussian mechanism for this problem under both the add/remove and the replacement neighboring relation.
We end the section by comparing our technique with a recent result from related work.
In Section~\ref{sec:conclusion} we discuss potential future extensions.

We consider a setting where a dataset $X$ consists of $n$ data points where $x_i \in [0,1]^{m \times d}$. 
The input is restricted such that each data point has non-zero entries in only a single row, that is $\vert \{j : x_{i, j} \neq {\textbf 0}^d \}\vert \leq 1$.
As such we have that $\|x_i\|_2 = (\sum_{j = 1}^m \sum_{k = 1}^d x_{i,j,k}^2)^{1/2} \leq \sqrt{d}$.
This kind of simple hierarchical data 
is relevant if we are answering $d$ queries and the outputs are grouped based on some property of the data points.
Two such examples would be to group users based on location or time when their data was reported.
We first consider how the standard Gaussian mechanism estimates $f(X) \coloneqq \sum_{i = 1}^n x_i$ under either neighboring relation.

\begin{lemma}
    \label{lem:standard-gaussian-with-buckets}
    The mechanism that adds noise independently to each entry of $f(X)$ distributed as $\normal{d/\mu^2}$ satisfies $\mu$-GDP under the add/remove neighboring relation.
    The same mechanism satisfies $\mu$-GDP under the replacement neighboring relation when the noise is distributed as $\normal{2d/\mu^2}$.
\end{lemma}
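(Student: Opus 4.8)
The plan is to reduce both claims to the standard Gaussian mechanism (Lemma~\ref{lem:gaussian-mech-GDP}) by computing the $\ell_2$ sensitivity of $f$ separately under each neighboring relation. Since $f(X) = \sum_{i=1}^n x_i$ is a sum, adding or removing a single point $x_i$ changes the output by exactly $\pm x_i$, while replacing a point $x_i$ by $x_i'$ changes it by exactly $x_i - x_i'$. In each case the sensitivity is therefore a bound on the $\ell_2$ norm of such a difference vector, and once that bound is in hand the privacy guarantee is immediate from Lemma~\ref{lem:gaussian-mech-GDP} (noting that this lemma extends verbatim from $\mathbb{R}^d$ to the $m \times d$ entries of $f(X)$).

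For the add/remove case I would first observe that $\Delta f = \max_i \|x_i\|_2$. Using the structural assumption that each $x_i$ has non-zero entries in only a single row $j$, together with $x_{i,j,k} \in [0,1]$, I get $\|x_i\|_2^2 = \sum_{k=1}^d x_{i,j,k}^2 \leq d$, hence $\|x_i\|_2 \leq \sqrt{d}$ and $\Delta f = \sqrt{d}$. Substituting $(\Delta f)^2 = d$ into Lemma~\ref{lem:gaussian-mech-GDP} yields per-coordinate variance $d/\mu^2$, which matches the first part of the statement.

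For the replacement case I would bound $\|x_i - x_i'\|_2$, and here the key step is to notice that $x_i$ and $x_i'$ need not occupy the same row. If they occupy different rows $j \neq j'$, their supports are disjoint, so $\|x_i - x_i'\|_2^2 = \|x_i\|_2^2 + \|x_i'\|_2^2 \leq 2d$; if instead they share a row, then each coordinate difference lies in $[-1,1]$ and $\|x_i - x_i'\|_2^2 \leq d$. The maximum is thus attained in the disjoint-support case, giving $\Delta f = \sqrt{2d}$, and Lemma~\ref{lem:gaussian-mech-GDP} then gives variance $2d/\mu^2$.

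The only real obstacle is the replacement sensitivity: one must recognize that the worst case is not replacing a point by a ``negated'' version within the same row (impossible, since entries are confined to $[0,1]$ rather than $[-1,1]$), but rather swapping between two distinct rows, which makes the two supports disjoint and doubles the squared norm. Everything else is a direct application of the already-established Gaussian mechanism lemma.
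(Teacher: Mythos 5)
Your proof is correct and takes essentially the same approach as the paper: compute the $\ell_2$ sensitivity of $f$ separately under add/remove ($\sqrt{d}$) and replacement ($\sqrt{2d}$), then invoke Lemma~\ref{lem:gaussian-mech-GDP}. Your case analysis for replacement (disjoint rows versus shared row) is just a slightly more explicit version of the paper's observation that $f(X)$ and $f(X')$ differ in at most $2d$ entries by at most $1$.
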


\begin{proof}
    Since $\|x_i\|_2 \leq \sqrt{d}$ the $\ell_2$ sensitivity of $f(X)$ is clearly $\sqrt{d}$ under the add/remove neighboring relation.
    Under the replacement neighboring relation $f(X)$ and $f(X')$ differs in at most $2d$ entries by at most $1$ and as such $\|f(X) - f(X')\|_2 \leq \sqrt{2d}$. 
    The privacy guarantees then follow from Lemma~\ref{lem:gaussian-mech-GDP}.
\end{proof}

\begin{figure}
    \centering
    \begin{minipage}{0.8\linewidth}    
    \begin{algorithm2e}[H]
        \label{alg:correlatedGaussianGroups}
        \caption{Correlated Gaussian Mechanism for disjoint queries}
        \SetKwInOut{Input}{Input}
        \SetKwInOut{Output}{Output}
        
        \Input{Dataset $X = (x_1, x_2, \dots, x_{n})$ where $x_i \in [0,1]^{m \times d}$ and for each $i \in [n]$ it holds that $\vert \{j : x_{i, j} \neq {\textbf 0}^d \}\vert \leq 1$.}
        \Output{$\mu$-GDP estimates of $f(X) \coloneqq \sum_{i=1}^n x_i$ and the count for each row.}

        \ForEach{$j \in [m]$}
        {
          Sample $\eta_j \sim \begin{cases}
            \mathcal{N}(0, \frac{\sqrt{d} + 1}{4\mu^2}) & \text{under the add/remove neighboring relation,} \\
            \mathcal{N}(0, \frac{4}{\mu^2}) & \text{under the replacement neighboring relation.}
        \end{cases}$\\
          \ForEach{$k \in [d]$}
          {
          Sample $z_{j, k} \sim \begin{cases}
            \mathcal{N}(0, \frac{d + \sqrt{d}}{4\mu^2}) & \text{under the add/remove neighboring relation,} \\
            \mathcal{N}(0, \frac{d}{\mu^2}) & \text{under the replacement neighboring relation.}
            \end{cases}$ \\
          Let $\tilde{x}_{j,k} \leftarrow f(X)_{j, k} + \eta_j + z_{j, k}$.
          } 
          $\tilde{y}_j \leftarrow \vert \{x_i : x_{i, j} \neq {\textbf 0}^d \}\vert + 2\eta_j$.\\
        }
            
        \Return{$\tilde{x}$, $\tilde{y}$.}
    \end{algorithm2e}
    \end{minipage}    
\end{figure}

Next, we consider how to adapt our mechanism to this setting.
The main idea of Algorithm~\ref{alg:correlatedGaussian} is to add the same noise sample to all queries.
This allows us to reduce the magnitude of the independent noise samples for each query when the sensitivity is concentrated along ${\bf 1}^d$ as shown for $d = 2$ in Figure~\ref{fig:geometric-intuition}.
In this setting we do not benefit much from adding the same noise sample to all queries.
Instead we sample correlated noise for each row.
We also release an estimate of the number of data points with non-zero entries for each row instead of releasing an estimate of $n$.
The pseudo-code is shown in Algorithm~\ref{alg:correlatedGaussianGroups}. We prove the privacy guarantees next.

\begin{lemma}
    \label{lem:correlated-noise-groups}
    Algorithm~\ref{alg:correlatedGaussianGroups} satisfies $\mu$-GDP.
\end{lemma}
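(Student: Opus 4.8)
The plan is to generalize the three-step reduction used for Algorithm~\ref{alg:correlatedGaussian} (Lemmas~\ref{lem:l2sens-in-injection}--\ref{lem:equivalent-under-post-processing}) to the hierarchical setting, handling the two neighboring relations in parallel through the choice of a single parameter $C$. First I would define a per-row injective map: for each data point $x_i$ whose unique non-zero row is $j^\ast$, set $\hat x_{i,j^\ast,k} = 2 x_{i,j^\ast,k}-1$ for $k \in [d]$ and $\hat x_{i,j^\ast,d+1}=C$, and set $\hat x_{i,j,\cdot}=\mathbf 0^{d+1}$ for every other row $j \neq j^\ast$ (a data point that is entirely zero maps to $\mathbf 0$). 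Writing $g(X)=\sum_i \hat x_i \in \mathbb R^{m\times(d+1)}$ and letting $n_j$ denote the number of data points whose non-zero row is $j$, this gives $g(X)_{j,k}=2f(X)_{j,k}-n_j$ for $k\in[d]$ and $g(X)_{j,d+1}=C\,n_j$, exactly as in Section~\ref{sec:algo} but applied independently inside each row-block. I would set $C=d^{1/4}$ for the add/remove relation and $C=\sqrt d/2$ for the replacement relation.

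The second step is to bound $\Delta g$ under each relation, mirroring Lemma~\ref{lem:l2sens-in-injection}. Under add/remove, adding or removing $x_i$ perturbs only the block of its non-zero row $j^\ast$, and the perturbation is $\hat x_{i,j^\ast,\cdot}$, whose norm is at most $\sqrt{d+C^2}=\sqrt{d+\sqrt d}$; so $\Delta g=\sqrt{d+\sqrt d}$ and Lemma~\ref{lem:gaussian-mech-GDP} shows the standard Gaussian mechanism on $g$ with per-coordinate variance $(d+\sqrt d)/\mu^2$ is $\mu$-GDP. Under replacement I must compare two cases. If the replaced and replacing points share the same non-zero row $j$, the difference is confined to that block, coordinate $d+1$ cancels, and each of the $d$ remaining coordinates changes by at most $2$, giving norm at most $2\sqrt d$. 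If the two points occupy different rows, the difference splits across two disjoint blocks and has squared norm at most $2(d+C^2)=2d+\tfrac d2$, i.e.\ norm at most $\sqrt{5d/2}$. Since $\sqrt{5d/2}\le 2\sqrt d$, the same-row case dominates and $\Delta g=2\sqrt d$, so the standard Gaussian mechanism on $g$ with per-coordinate variance $4d/\mu^2$ is $\mu$-GDP.

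The third step reproduces Lemma~\ref{lem:equivalent-under-post-processing}: I would exhibit a fixed bijection $h$ between the output space of Algorithm~\ref{alg:correlatedGaussianGroups} and that of the standard Gaussian mechanism on $g$, defined block-wise by $h(\tilde x,\tilde y)_{j,k}=2\tilde x_{j,k}-\tilde y_j$ for $k\in[d]$ and $h(\tilde x,\tilde y)_{j,d+1}=C\,\tilde y_j$ (inverted by $\tilde y_j = h_{j,d+1}/C$ and $\tilde x_{j,k}=(h_{j,k}+\tilde y_j)/2$). Substituting the algorithm's output shows $h(\cdot)_{j,k}=g(X)_{j,k}+2z_{j,k}$ and $h(\cdot)_{j,d+1}=g(X)_{j,d+1}+2C\eta_j$; a direct variance computation confirms that $2z_{j,k}$ and $2C\eta_j$ each have variance $(\Delta g)^2/\mu^2$ for both choices of $C$, and that — because the $\eta_j$ contribution cancels out of every $k\in[d]$ coordinate — all $m(d+1)$ recovered noise terms are mutually independent. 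Thus $h$ pushes the law of Algorithm~\ref{alg:correlatedGaussianGroups} forward to the law of the $\mu$-GDP mechanism $g(X)+Z$, and equivalently the algorithm is $h^{-1}$ applied to that mechanism; the post-processing property (Lemma~\ref{lem:post-processing}) then yields $\mu$-GDP.

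I expect the main obstacle to be the replacement sensitivity analysis in the second step. Unlike the add/remove case — and unlike the original single-row setting, where only one block can ever change — a replacement may move a data point between rows, so the difference vector can occupy two blocks at once. The content of the argument is precisely that the inflated per-block norm $\sqrt{d+C^2}$ spread across two blocks still does not exceed the intra-block bound $2\sqrt d$, and this is exactly what forces the smaller parameter $C=\sqrt d/2$ (rather than $d^{1/4}$) for replacement and fixes the variances appearing in Algorithm~\ref{alg:correlatedGaussianGroups}.
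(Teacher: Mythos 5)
Your proof is correct, and its skeleton is the same as the paper's: embed each data point block-wise into $\mathbb{R}^{m\times(d+1)}$, bound the $\ell_2$ sensitivity of the embedded sum $g$, invoke Lemma~\ref{lem:gaussian-mech-GDP}, and realize Algorithm~\ref{alg:correlatedGaussianGroups} as a bijective post-processing of that standard Gaussian mechanism. Two differences are worth recording. First, the paper does not run the embedding for add/remove at all; it notes that neighboring datasets differ in a single row-block and appeals to the proof of Lemma~\ref{lem:equivalent-under-post-processing} applied row-wise, whereas you carry both relations through one embedding with a relation-dependent $C$. These are equivalent, and your explicit observation that $\Delta g=\sqrt{d+C^2}$ because only one block can change makes the implicit ``parallel composition'' structure cleaner. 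Second, and more substantively, for the replacement relation the paper sets $C=d^{1/2}$ while you set $C=\sqrt{d}/2$. Both yield $\Delta g = 2\sqrt{d}$ (in the paper the two cases are each exactly $2\sqrt{d}$; in yours the cross-row case gives $\sqrt{5d/2}\le 2\sqrt{d}$), but they do not induce the same mechanism after post-processing: the correlated noise recovered as $\eta_j = Z_{j,d+1}/(2C)$ has variance $(\Delta g)^2/(4C^2\mu^2) = d/(C^2\mu^2)$, which is $1/\mu^2$ for the paper's $C=d^{1/2}$ and $4/\mu^2$ for your $C=\sqrt{d}/2$. Algorithm~\ref{alg:correlatedGaussianGroups} as written samples $\eta_j\sim\mathcal{N}(0,4/\mu^2)$ under replacement, so your parameter choice is the one that reproduces the pseudo-code exactly; the paper's choice proves $\mu$-GDP for a variant with less correlated noise (which is in fact slightly better for utility, total per-query variance $(d+1)/\mu^2$ versus $(d+4)/\mu^2$), and covering the stated pseudo-code from that variant would require one further randomized post-processing step that injects the extra correlated noise $(\eta'_j{\bf 1}^d, 2\eta'_j)$. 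So your blind proof is not merely correct: it resolves a small factor-of-four inconsistency between the paper's proof sketch and its algorithm in favor of the algorithm as written.
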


\begin{proof}
    In the case of the add/remove relation $f(X)$ and $f(X')$ only differ in one row.
    Algorithm~\ref{alg:correlatedGaussianGroups} essentially runs Algorithm~\ref{alg:correlatedGaussian} independently for each row. The privacy guarantees follow from the same proof as Lemma~\ref{lem:equivalent-under-post-processing}.

    We prove that the privacy guarantees holds under replacement using an alternative representation of $X$ similar to Lemma~\ref{lem:release-of-injection} for the setting of Section~\ref{sec:algo}.
    Define $\hat{x}_i \in \mathbb{R}^{m \times (d + 1)}$ such that $\hat{x}_{i,j,k} = 0$ for all $k \in [d+1]$ if $\|x_{i, j}\|_2 = 0$.
    By definition of the problem setup we have $\|x_{i, j}\|_2 = 0$ for all but one row for each $i \in [n]$.
    For the row $j \in [m]$ where $\|x_{i, j}\|_2 > 0$ we define
    $\hat{x}_{i,j,k} = 2 \cdot x_{i,j,k} - 1$, and $\hat{x}_{i,j,d + 1} = d^{1/2}$. We define the function $g(X) \coloneqq \sum_{i = 1}^n \hat{x}_i$.
    
    Let $x_i$ denote the data point we replace with $x'_i$ to obtain $X'$ from $X$ and fix the row $j \in [m]$ such that $\| x_{i, j} \|_2 > 0$.
    We have two cases to consider 
    (1) If the non-zero entries of $x'_i$ are also in row $j$ (that is, $\| x'_{i,j} \|_2 > 0$) then $\hat{x}_{i,j,k}$ and $\hat{x}'_{i,j,k}$ differ by at most $2$ for each $k \in [d]$ for row $j$ and all other entries of $\hat{x}_i$ are unchanged.
    As such we have $\|g(X) - g(X')\|_2 = \|\hat{x}_{i,j} - \hat{x}'_{i,j}\|_2 \leq \sqrt{d \cdot 2^2} = 2\sqrt{d}$.
    (2) If the non-zero entries of $x'_i$ are in another row $j'$ the entries of $g(X)$ and $g(X')$ differ only in these two rows.
    Each entry is zero for at least one of $\hat{x}_i$ and $\hat{x}'_i$.
    In this case we have that $\|g(X) - g(X')\|_2 = \|\hat{x}_{i} - \hat{x}'_{i}\|_2 \leq (2 \cdot (\sum_{k = 1}^d 1^2 + (d^{1/2})^2))^{1/2} = \sqrt{4d} = 2\sqrt{d}$.

    As such, the $\ell_2$ sensitivity of $g$ under the replacement neighboring relation is $2\sqrt{d}$ and we can release $g(X)$ under $\mu$-GDP by adding noise from $\normal{4d/\mu^2}$ independently to each entry.
    The proof that Algorithm~\ref{alg:correlatedGaussianGroups} is equivalent to post-processing the above mechanism follows the steps of the proof for Lemma~\ref{lem:equivalent-under-post-processing} applied to each row for $C = d^{1/2}$.
\end{proof}

For certain tasks we want an accurate estimate of the number of data points in each row.
We can get a better estimate of the count for data points in each row ($\tilde{y} \in \mathbb{R}^m$) under the add/remove neighboring relation by balancing the different sources of noise similar to Section~\ref{sec:reduced-correlated-noise}.

\subsection{Bounded density}
\label{sec:bounded-density}

\begin{figure}[t]
    \centering
    \hfill
    \begin{minipage}{0.45\linewidth} 
    \includegraphics[width=\linewidth]{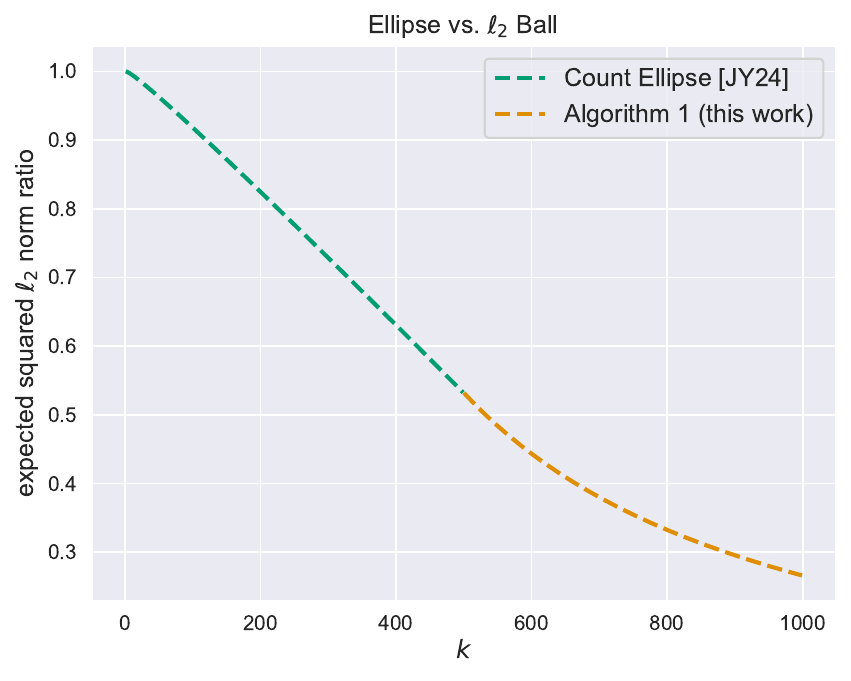} 
    \end{minipage}
    \begin{minipage}{0.45\linewidth}
        \includegraphics[width=\linewidth]{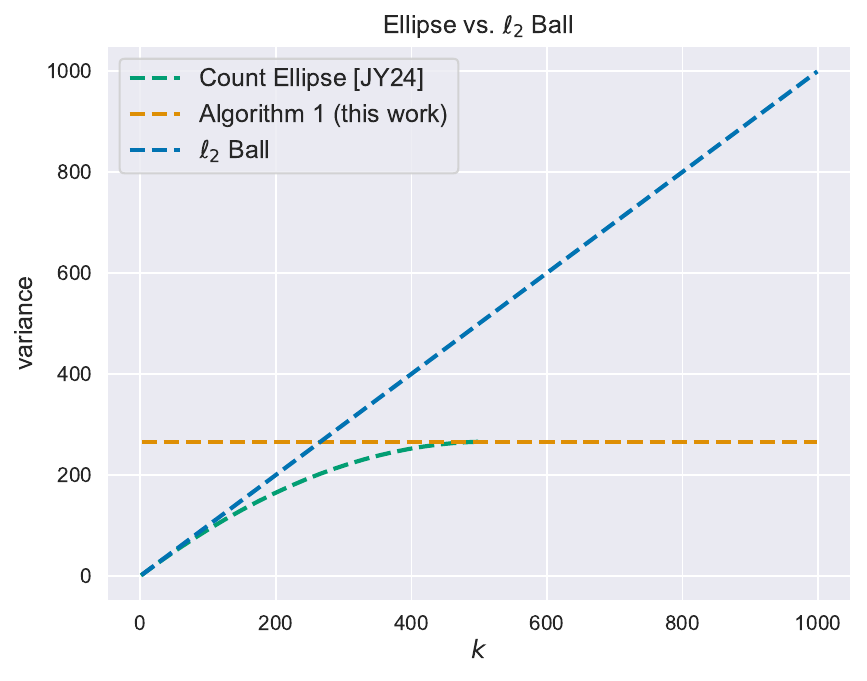}
    \end{minipage}
    \hfill
    \caption{Comparison of error for the bounded Count problem for $d=1000$. Note that the optimal Count ellipse by~\cite{constructions-k-norm-elliptic} requires $k \leq d/2$. The left plot corresponds to the right plot in~\cite[Figure~1]{constructions-k-norm-elliptic} where we included our mechanism for $k > d/2$. 
    The line for the $\ell_2$ Ball represents the standard Gaussian mechanism in this setting.} 
    \label{fig:count-sparse}
\end{figure}

Here we consider a setting where each data point contributes to a bounded number of queries.
This setting was recently studied by~\cite{constructions-k-norm-elliptic}. 
They gave efficient constructions for adding elliptical Gaussian noise to two classes of queries.
The relevant query for this paper is the \textit{Count} problem, which is equivalent to the setting from Section~\ref{sec:algo} with the additional constraint that each data point has at most $k \leq d$ non-zero entries.
Under this restricted setting the standard Gaussian mechanism adds noise to each query where the variance scales only by $k$ rather than $d$.
Similar to Algorithm~\ref{alg:correlatedGaussian}, the mechanism by \cite{constructions-k-norm-elliptic} adds additional noise along ${\bf 1}^d$.
They give closed-form solutions for the shape of the optimal ellipse minimizing expected squared $\ell_2$ norm. 
However, their result only applies when $k \leq d/2$.
The setting we considered in Section~\ref{sec:algo} corresponds to the case where $k = d$.
A natural question is whether the error can be improved for the range $d/2 < k < d$.
It turns out that there is almost no room for improvement. 
In fact, our mechanism is almost optimal for any $k \geq d/2$ as we show below.

In Figure~\ref{fig:count-sparse} we recreated a plot from Figure 1 of~\cite{constructions-k-norm-elliptic}.
They plot the error relative to the standard Gaussian mechanism. 
The improvement increases with $k$, but their mechanism is undefined for $k > d/2$.
We extended the line with our mechanism for $k > d/2$ showing the improvement from elliptical noise for all values of $k$.
We also plot all three mechanisms without normalizing the variance.
From that plot it is clear that their mechanism should be used for small $k$ while our mechanism is preferred for any $k > d/2$.
Next, we bound the difference in variance between our mechanism and the optimal general Gaussian mechanism for any $d/2 \leq k \leq d$.

We first restate a result of~\cite{constructions-k-norm-elliptic} since the variance of their mechanism is optimal among Gaussian mechanisms for any $k \leq d/2$. 
Here we are interesting in the variance of their mechanism with the least restrictive parameter, that is $k = \lfloor d/2 \rfloor$.
They do not explicitly state the variance in their paper, but we can calculate it from their closed form expressions describing the shape of the Count ellipse.
For simplicity of presentation we only consider even values of $d$. 
The equation for the variance of their mechanism when $d$ is odd (and $k = (d - 1)/2$) is more complicated, and we do not need it to show that our mechanism is close to optimal for any $d$.

\begin{lemma}
    \label{lem:JY24-variance}
    The Count ellipse mechanism by \cite{constructions-k-norm-elliptic} with parameters $k = d/2$ (where $d$ is even) and $\rho = \mu^2/2$ satisfies $\mu$-GDP.
    The mechanism estimates each query with error distributed as
    \[
        \mathcal{N}\left(0, \frac{d + 2\sqrt{d - 1}}{4\mu^2} \right) \,.
    \]
\end{lemma}

\begin{proof}
    We do not prove the privacy guarantees of their mechanism and we only sketch the proof for the variance.
    For more rigorous details about the mechanism we refer to their paper and references within.
    For simplicity, we assume without loss of generality that $\mu = 1$.
    In the proof of \cite[Theorem~9.9]{constructions-k-norm-elliptic} we find their closed form expressions for variables $a_1$ and $a_2$.
    We do not repeat the expressions here and refer interested readers to their paper.
    The mechanism adds zero-mean Gaussian noise in the shape of an ellipse with an axis along $\mathbf{1}^d$ of length $a_1$.
    The remaining $d - 1$ axis lengths are all equal to $a_2$.
    The expected squared $\ell_2$ norm of the noise is $a_1^2 + (d-1) \cdot a_2^2$,
    and the variance of the error for each query is $(a_1^2 + (d-1) \cdot a_2^2)/d$.
    By setting $k = d/2$ in the expressions for $a_1$ and $a_2$ and simplifying we have that $(a_1^2 + (d-1) a_2^2)/d = (d + 2\sqrt{d-1})/4$.
    Here we left out the steps to simplify the expression for brevity.
\end{proof}

Recall from Lemma~\ref{lem:alg1-error} that the variance of Algorithm~\ref{alg:correlatedGaussian} is $(d + 2 \sqrt{d} + 1)/(4\mu)$. 
The variance achieved by \cite{constructions-k-norm-elliptic} is optimal among general Gaussian mechanisms for any $k \leq \lfloor d/2 \rfloor$.
The variance of an optimal mechanism is non-decreasing in $k$. 
As such, the variance of our mechanism is almost optimal for any $k \geq d/2$.

\begin{lemma}
    \label{lem:near-optimal-bound}
    Let $OPT$ denote the minimum variance for any general Gaussian mechanism under $\mu$-GDP for any $d$ and parameter $k \geq d/2$.
    Then the variance of Algorithm~\ref{alg:correlatedGaussian} is at most $OPT + 2\sqrt{3}/(4\mu^2) < OPT + 0.87/\mu^2$.  
\end{lemma}

\begin{proof}
    We first consider even values of $d$ and the gap to the optimal variance for $k = d/2$.
    We then bound the error when $d$ is odd.
    If $d$ is even the difference in variance between our mechanism and the optimal Count ellipse is
    \[
    \frac{d + 2\sqrt{d} + 1}{4\mu^2} - \frac{d + 2\sqrt{d - 1}}{4\mu^2} = \frac{1 + 2\sqrt{d} - 2\sqrt{d-1}}{4\mu^2} \,.
    \]
    This function is decreasing in $d$ and as such it is maximized for $d=2$ where $(1 + 2\sqrt{2} - 2\sqrt{2-1})/(4\mu^2) \leq 0.46/\mu^2$.
    Since the variance for the Count ellipse mechanism with parameter $k = d/2$ is a lower bound on $OPT$ for any $k \geq d/2$, this is an upper bound on the gap to the optimal variance for all even $d$.

    When $d > 2$ is odd (in the special case of $d = 1$ the variance is optimal) we can bound the error using the triangle inequality. 
    The difference in variance of Algorithm~\ref{alg:correlatedGaussian} for $d$ and $d-1$ is 
    \[
        \frac{d + 2\sqrt{d} + 1}{4\mu^2} - \frac{(d-1) + 2\sqrt{(d-1)} + 1}{4\mu^2} = \frac{1 + 2\sqrt{d} - 2\sqrt{d-1}}{4\mu^2} \,.
    \]
    The value of $OPT$ when $d$ is odd and $k > d/2$ is lower bounded by the value of $OPT$ for parameters $d - 1$ and $k = (d-1)/2$ that we discussed above. 
    As such, the variance of Algorithm~\ref{alg:correlatedGaussian} is upper bounded by
    \[
        OPT + \frac{1 + 2\sqrt{d-1} - 2\sqrt{(d-1)-1}}{4\mu^2} + \frac{1 + 2\sqrt{d} - 2\sqrt{d-1}}{4\mu^2} = OPT + \frac{2 + 2\sqrt{d} - 2\sqrt{d-2}}{4\mu^2} \,.
    \]

    This gap for the bound is again decreasing in $d$ and the bound is exact for the degenerate case of $d = 3$ and $k = 1$.
    When $k = 1$ the optimal mechanism is the standard Gaussian mechanism which adds spherical Gaussian noise with variance $1/\mu^2$. We have that Algorithm~\ref{alg:correlatedGaussian} achieves variance $(4 + 2\sqrt{3})/(4\mu^2)$ for $k = d = 3$.
\end{proof}

We leave finding the exact value of $OPT$ for $k > d/2$ as an open problem.
As a small note, we point out that the mechanism of \cite{constructions-k-norm-elliptic} does not include an estimate of the dataset size. 
The optimal ellipse might differ slightly if we want to return an estimate of $n$ similar to Algorithm~\ref{alg:correlatedGaussian}.
It is also worth noting that it is likely possible to slightly reduce the variance even for $k = d$ if we do not care about the estimate of $n$.
Notice that there is a small gap between the corners of the sensitivity space and our ellipse in Figure~\ref{fig:geometric-intuition}.
If we instead add Gaussian noise in the shape of a tighter ellipse we would reduce the variance slightly.
We do not consider that in this paper because our mechanism is simpler, and we can return an estimate of the dataset size at almost no cost for larger $d$ as shown above in Lemma~\ref{lem:near-optimal-bound}.
Any improvement in variance will only be noticeable for tiny values of $d$.

\section{Related work}
\label{sec:related}

The Gaussian mechanism satisfies several definitions of differential privacy. 
Exact bounds are known of $(\varepsilon, \delta)$-DP~\cite{analyticalGaussian}, $(\alpha, \varepsilon)$-RDP~\cite{renyiDP}, $\rho$-zCDP~\cite{zeroConcentrated}, and $\mu$-GDP~\cite{gaussianDPand-f-DP}.
Our privacy guarantees follow from a reduction to the standard Gaussian mechanism.
As such, our results apply to any of the above definitions of differential privacy.
We choose to present our mechanism in terms of $\mu$-GDP because this definition exactly describes the privacy guarantees of the Gaussian mechanism.
For all problems we consider in this paper our privacy guarantees for worst-case pairs of neighboring datasets exactly match those of the standard Gaussian mechanism.
In technical terms, this means that there exist pairs of neighboring datasets for which the privacy loss random variable of our mechanism is distributed as $\mathcal{N}(\mu^2/2, \mu^2)$.
That is also the case for the standard Gaussian mechanism.
We refer to \cite[Proposition~3]{steinkeCompositionAmplification} for a discussion of the privacy loss distribution of additive Gaussian noise.

\cite{betweenSteinkeUllman} gave a lower bound on the sample complexity for answering $d$ counting queries under $(\varepsilon, \delta)$-differential privacy. 
They show that the Gaussian mechanism has asymptotical optimal $\ell_1$ error under reasonable assumptions on $\delta$. 
The input domain in their paper is $\{-1, 1\}^d$ rather than $\{0, 1\}^d$.
They only consider the replacement neighboring relation and in that setting the counting query problem is equivalent for the two domains.
It is easy to translate any estimate from one domain to the other using a simple post-processing step assuming that $0$ should be mapped $-1$.
In contrast, the problem differs between the two domains under the add/remove neighboring relation. 
We cannot directly translate an estimate from one domain to the other using post-processing since the size of datasets is not fixed.
However, the representation of the problem in the domain $[-1,1]$ plays a key role in the privacy proof for our mechanism.

The mechanism we present in this paper adds unbiased correlated Gaussian noise designed to fit the structure of the queries.
Adapting Gaussian noise to the shape of the sensitivity space is not a novel concept.
We refer to mechanisms that add (possibly correlated) unbiased noise as general Gaussian mechanisms.
\cite{geometryOfDPSparseApproximate} gave an algorithm for answering any set of linear queries under approximate differential privacy.
Linear queries are a broad class of queries that includes all problems considered in this paper.
They give an algorithm for choosing the shape of Gaussian noise for any linear query by fitting an ellipse to the convex hull of the sensitivity space.
\cite{NikolovTang24GeneralGaussian} considered the problem of mean estimation under the replacement neighborhood when data points are from some bounded domain.
They show that the general Gaussian mechanism achieves near-optimal error among all unbiased mechanisms for a broad class of error measures in that setting.
However, their algorithm for computing the shape of noise for arbitrary queries can be inefficient.

In contrast to the general results of~\cite{geometryOfDPSparseApproximate, NikolovTang24GeneralGaussian}, 
we only consider a subset of linear queries whose sensitivity space has a certain structure.
This allows us to tailor our mechanism specifically towards that structure.
Concurrent with our work, \cite{constructions-k-norm-elliptic} gave efficient constructions for adding elliptical Gaussian noise to two types of linear queries.
They present closed-form solutions for the shape of the optimal ellipse for \textit{Count} and \textit{Vote} queries.
Most relevant to this paper is the Count problem which we discuss in more detail in Section~\ref{sec:bounded-density}.
They consider the problem where entries are in $[0,b]$ rather than $[0,1]$.
Note that the task of privately estimating the sum is equivalent for both settings as we can simply scale the input by a factor of $b$.

A recent result by~\cite{add-remove-mean-estimation} improved the constant for private mean estimation under the add/remove neighboring relation.
Their technique also includes an estimate of $n$ and an embedding of the sum in a space using an additional dimension.
However, their setting differs significantly from ours and they only consider one-dimensional data making our results incomparable.

Concurrent to this work, \cite{free-lunch} also considered the task of privately estimating sums and dataset size, with the goal of outputting a private mean.
Their technique also relies on an embedding of the data points using an additional dimension.
Similar to our mechanism, they return a private estimate of the dataset size at no additional privacy cost.
They explore multiple settings in their paper.
In the case of Gaussian noise for one dimensional data ($x_i \in [0,1]$), the error of their mechanism is identically distributed to Algorithm~\ref{alg:correlatedGaussian}.
Although highly subjective, their technique is arguably preferred when $d = 1$ because their privacy proof is simple and very intuitive (see Figure 1 of their paper).
However, they do not reduce the error for the sums when $d > 1$, which is the main contribution of our work.

Our mechanism relies on an intermediate representation of the query outputs which can be privately estimated using the standard Gaussian mechanism.
An estimate of the original queries $f(X)$ can then be recovered using post-processing.
This technique is similar to the matrix mechanism~\cite{matrix-mechanism} used for optimizing the error of differentially private linear queries.
The mechanism works by factorizing a query matrix into two new matrices.
The first matrix is then applied to the input dataset and either the standard Laplace or Gaussian mechanism is used to privately release the output. 
The other matrix is then used to recover an estimate of the original query matrix.
The technique has been used to improve mechanisms in various settings such as computing prefix sums~(e.g. \cite{almost-tight-continual,online-prefix-sums,lpp-metric}).

\cite{LT23-DPMG} gave an $(\varepsilon, \delta)$-differentially private mechanism for releasing a Misra-Gries sketch.
At a high level, their mechanism is very similar to ours. 
They exploit the structure of the sensitivity space of the Misra-Gries sketch and add much less noise compared to standard techniques that scale the noise by the $\ell_1$ or $\ell_2$ sensitivity. 
They add the same noise sample to all counters in the sketch to handle a special case where all counters differ by $1$ between some pairs of neighboring datasets.
This allows them to reduce the magnitude of noise significantly because Misra-Gries sketches for neighboring datasets only differ by $1$ in a single counter except for this special case.
They present some intuition behind the privacy guarantees in the technical overview of their paper using a $d+1$ representation which is similar to the one we use in Lemma~\ref{lem:release-of-injection}.
However, the structure of the sensitivity space for the queries we consider is different than the structure for the Misra-Gries sketch.

\section{Conclusion and future work}
\label{sec:conclusion}

In this paper we presented a simple variant of the general Gaussian mechanism for the fundamental task of answering $d$ independent queries in $[0,1]$ under the add/remove neighboring relation.
In Section~\ref{sec:extensions} we extended our mechanism to a setting where queries are divided into $m$ groups that form a partition of all queries and each data point only contributes to a single group.
The general idea in this work is to take advantage of the hierarchical structure of these queries. 
Adding the same noise sample to all queries in a group allows us to reduce the noise added independently to each query.
We believe this technique can be used to design differentially private mechanisms for other problems with similar structures.
For example, we could consider settings where the groups of queries partially overlap instead of forming a partition.
Alternatively, data points might contribute to a bounded number of groups instead of only one.
In such settings we might need to increase the magnitude of correlated noise.
Another interesting avenue to explore is whether correlated noise can improve constants for the Gaussian Sparse Histogram Mechanism~\cite{GaussianSparseHistogramMechanism}.
Finally, it is worth noting that we presented our work as a variant of the Gaussian mechanism, but the technique can be adapted to other mechanisms such as the Laplace mechanism for pure differential privacy.
\\
\paragraph*{Acknowledgments}
This work was carried out when the author was at the IT University of Copenhagen.
We thank Joel Daniel Andersson and other members of the Providentia project at University of Copenhagen for helpful discussions. 
We thank anonymous reviewers for suggestions that helped improve the paper.

\bibliographystyle{alpha}
\bibliography{literature}

\end{document}